\newcommand{\PreserveBackslash}[1]{\let\temp=\\#1\let\\=\temp}
\newcolumntype{C}[1]{>{\PreserveBackslash\centering}p{#1}}
\newcolumntype{R}[1]{>{\PreserveBackslash\raggedleft}p{#1}}
\newcolumntype{L}[1]{>{\PreserveBackslash\raggedright}p{#1}}
\newtheorem{lemma}{Lemma}
\newtheorem{proof}{Proof}
\newtheorem{remark}{Remark}
\begin{document}

\title{\LARGE{Adaptive Coding and Modulation for Large-Scale Antenna Array Based
 Aeronautical Communications in the Presence of Co-channel Interference}}
\author{Jiankang~Zhang,~\IEEEmembership{Member,~IEEE,}
 Sheng~Chen,~\IEEEmembership{Fellow,~IEEE,}
 Robert~G.~Maunder,~\IEEEmembership{Senior~Member,~IEEE,}
 Rong~Zhang,~\IEEEmembership{Senior~Member,~IEEE,}
 and~Lajos~Hanzo,~\IEEEmembership{Fellow,~IEEE}

\thanks{The authors are with Electronics and Computer Science, University of
 Southampton, U.K. (E-mails: jz09v@ecs.soton.ac.uk, sqc@ecs.soton.ac.uk,
 rm@ecs.soton.ac.uk, rz@ecs.soton.ac.uk, lh@ecs.soton.ac.uk) J. Zhang is
also with School of Information Engineering, Zhengzhou University, China.
S. Chen is also with King Abdulaziz University, Jeddah, Saudi Arabia.} %
\thanks{The financial support of the European Research Council's Advanced Fellow Grant, 
 the Royal Society Wolfson Research Merit Award, and the Innovate UK funded Harmonised
 Antennas, Radios, and Networks (HARNet) are gratefully acknowledged.} %
\vspace*{-5mm}
}

\maketitle

\IEEEpeerreviewmaketitle

\begin{abstract}
 In order to meet the demands of `Internet above the clouds', we propose a
 multiple-antenna aided adaptive coding and modulation (ACM) for aeronautical
 communications. The proposed ACM scheme switches its coding and modulation mode
 according to the distance between the communicating aircraft, which is readily
 available with the aid of the airborne radar or the global positioning system. We
 derive an asymptotic closed-form expression of the signal-to-interference-plus-noise
 ratio (SINR) as the number of transmitting antennas tends to infinity, in the
 presence of realistic co-channel interference and channel estimation errors. The
 achievable transmission rates and the corresponding mode-switching distance-thresholds
 are readily obtained based on this closed-form SINR formula. Monte-Carlo simulation
 results are used to validate our theoretical analysis. For the specific example
 of 32 transmit antennas and 4 receive antennas communicating at a 5 GHz carrier
 frequency and using 6 MHz bandwidth, which are reused by multiple other pairs of
 communicating aircraft, the proposed distance-based ACM is capable of providing
 as high as 65.928 Mbps data rate when the communication distance is less than 25\,km.
\end{abstract}

\begin{IEEEkeywords}
 Aeronautical communication, large-scale antenna array, Rician channel, adaptive
 coding and modulation, precoding
\end{IEEEkeywords}
\linespread{2.0}
\section{Introduction}\label{S1}

 The appealing {service} of the `Internet above the clouds' \cite{jahn2003evolution} 
 motivates researchers to develop high data rate and high spectral-efficiency (SE)
 aeronautical communication techniques. Traditionally, satellite-based access has
 been the main solution for aeronautical communication. However, it suffers from
 the drawbacks of low throughput and high processing delay as well as high charges
 by the satellite providers. The aeronautical ad hoc network (AANET)
 \cite{vey2014aeronautical} concept was conceived for supporting direct communication
 and data relaying among aircraft for airborne Internet access. However, the current
 existing transmission techniques are incapable of providing the high throughput and 
 high SE communications among aircraft required by this airborne Internet access
 application.

 The planed future aeronautical communication systems, specifically, the L-band digital
 aeronautical communications system (L-DACS) \cite{schnell2014ldacs,jain2011analysis}
 and the aeronautical mobile airport communication system (AeroMACS)
 \cite{budinger2011aeronautical,bartoli2013aeromacs}, only provide upto 1.37\,Mbps
 and 9.2\,Mbps air-to-ground communication date rates, respectively. Moreover, the
 L-DACS1 air-to-air mode \cite{graupl2011ldacs1} is only capable of providing 273\,kbps
 net user rate for direct aircraft-to-aircraft communication, which cannot meet the
 high throughput demand of the Internet above the clouds. Furthermore, these rates
 are achievable for point-to-point transmissions, but multiple frequency resources are
 required for supporting multiple pairs of aircraft communications. Therefore, these
 future aeronautical communication techniques fail to satisfy the demanding requirements
 of airborne Internet access. Additionally, the L-DACS1 air-to-air mode has to collect
 and distribute the associated channel state information (CSI) to all aircraft within
 the communication range \cite{graupl2011ldacs1}, which is challenging in practical
 implementation. { Even if the air-to-air communication capacity of
 these future aeronautical communication systems could be made sufficiently high, they
 would still be forbidden for commercial airborne Internet access, because their
 frequency bands are within the bands assigned to the safety-critical air traffic control
 and management systems.}

 In order to meet the high throughput and high SE demands of the future AANET, we propose
 a large-scale antenna array aided adaptive coding and modulation (ACM) based solution
 for aeronautical communication. Before reviewing the family of ACM and multiple-antenna
 techniques, we first elaborate on the specific choice of the frequency band suitable for
 the envisaged AANET. Existing air traffic systems mainly use the  very high frequency
 (VHF) band spanning from 118\,MHz to 137\,MHz \cite{Haind2007Anindependent}, and there
 are no substantial idle frequency bands. The ultra high frequency (UHF) band has almost
 been fully occupied by television broadcasting, cell phones and satellite communications,
 including the global positioning system (GPS). Thus, no substantial idle frequency bands
 can be found in the UHF band either. This motivates us to explore the super high frequency
 (SHF) band spanning from 3\,GHz to 30\,GHz, for example, using 5\,GHz carrier frequency
 for this aeronautical communication application. { Note that even if there were sufficient unused frequency slots in the VHF and UHF bands, it is advisable not to use them
 because the frequency band of the envisaged airborne Internet access system should be
  sufficiently far away from the  bands assigned to the safety-critical air control and
 management systems, satellite communication and GPS systems.}

 ACM \cite{goldsmith1998adaptive,hanzo2002adaptive} has been demonstrated to be a
 powerful technique of increasing data rate and improving SE over wireless fading
 channels. It has been extensively investigated also in the context of IEEE 802.11
 \cite{tan2008link}, LTE-advance 4G mobile systems \cite{dahlman20134g,meng2014constellation}
 and broadband satellite communication systems \cite{alberty2007adaptive}. The
 optimal ACM relies on the perfect knowledge of the instantaneous CSI, but channel
 estimation errors are unavoidable in practical communication systems
 \cite{zhou2004adaptive}. Furthermore, the CSI of frequency division duplexing based
 systems must be obtained through a feedback channel, which potentially introduces
 feedback errors and delays \cite{zhou2004accurate}. These factors significantly degrade
 the ACM performance. In order to reduce the sensitivity to CSI errors, Zhou {\it et al.}
 \cite{zhou2004adaptive} proposed an adaptive modulation scheme relying on partial CSI,
 while Taki {\it et al.} \cite{taki2014adaptive} designed an ACM scheme based directly
 on imperfect CSI. A whole range of differentially encoded and non-coherently detected
 star-QAM  schemes were characterized in \cite{hanzo2004quadrature}, while the channel
 coding aspects were documented in \cite{hanzo2002adaptive}. Most existing research
 on ACM focused on terrestrial wireless communications, where the channels exhibit
 Rayleigh fading characteristics. But the research community seldom considered the
 propagation characteristics of aeronautical communications in designing ACM schemes. 

 In aeronautical communication, typically there is line-of-sight (LOS) propagation, in
 addition to multipath fading, where the LOS component dominates the other multipath
 components of the channel. The investigations of
 \cite{haas2002aeronautical,gligorevic2013airport} have revealed that the aeronautical
 channel can be modeled as a Rician channel for the flight phases of taxiing, landing,
 takeoff and en-route, while the aeronautical channel during the aircraft's parking
 phase can be modeled as a Rayleigh channel, which can be viewed as a specific case of
 the Rician channel with a zero Rician $K$-factor. Furthermore, multiple-antenna aided
 techniques have been employed in aeronautical communication for increasing the
 transmission capacity \cite{zhang2015interpolated}. Although it is challenging to deploy
 multiple antennas, on an aircraft \cite{zhang2017survey}, especially a large-scale
 antenna array, the development of conformal antenna \cite{gao2015conformal} has paved
 the way for deploying large-scale antenna arrays on aircraft. At the time of writing,
 however, there is a paucity of information on how much capacity can be offered by
 employing multiple antennas in aeronautical communications.

 Against this background, we develop { an ACM based and large-scale antenna array aided
 physical-layer transmission technique capable of facilitating en-route airborne Internet
 access for the future AANET employing the time division duplexing (TDD) protocol, which
 has already been adopted by the so-called automatic dependent surveillance-broadcast standard\cite{radio2002minimum}, as well as by the L-DACS and
 AeroMACS arrangements. Our main contributions are}
\begin{enumerate}
\item We propose and analyze a distance-information based ACM scheme for large-scale
 antenna array aided aeronautical communication in SHF band, which switches its ACM
 mode based on the distance between the desired pair of communicating aircraft. This
 scheme is more practical than the existing ACM schemes that rely on instantaneous
 channel-quality metrics, such as the instantaneous signal-to-noise ratio (SNR). This
 is because in aeronautical communication it is extremely difficult to acquire an
 accurate estimate for any instantaneous channel-quality metric, and an ACM based on
 such a switching metric will frequently fail. By contrast, the accurate distance
 information between the communicating aircraft can readily be acquired with the aid
 airborne radar. Alternatively, the accurate position information can also be acquired
 with the assistance of GPS.
\item We explicitly derive a closed-form expression of the asymptotic
 signal-to-interference-plus-noise ratio (SINR) of multiple-antenna aided aeronautical
 communication in the presence of realistic channel estimation errors and the co-channel
 interference imposed by other aircraft within the communication range. This closed-form
 SINR formula enables us to directly derive the achievable theoretical transmission rates
 and the associated mode-switching distance-thresholds for the proposed ACM scheme. 
 Moreover, as a benefit of large-scale antenna arrays, every pair of communicating
 aircraft in our system uses the same frequency resource block, which dramatically
 enhances the system's SE.
\end{enumerate}

 The rest of this paper is organized as follows. Section~\ref{S2} describes the
 multiple-antenna aided aeronautical communication system, with emphasis on the
 propagation and signal models. Section~\ref{S3} is devoted to the proposed
 distance-based ACM scheme, including the derivation of the closed-form  asymptotic
 SINR in the presence of imperfect CSI and co-channel interference that leads to our
 detailed design of the achievable data rates and the associated switching distance
 thresholds. Section~\ref{S4} presents our simulation results for characterizing the
 impact of the relevant parameters in aeronautical communication. Our conclusions are
 given in Section~\ref{S5}.

\section{System model}\label{S2}

\begin{figure*}[bp!]
\vspace*{-4mm}
\begin{center}
 \includegraphics[width=0.95\textwidth]{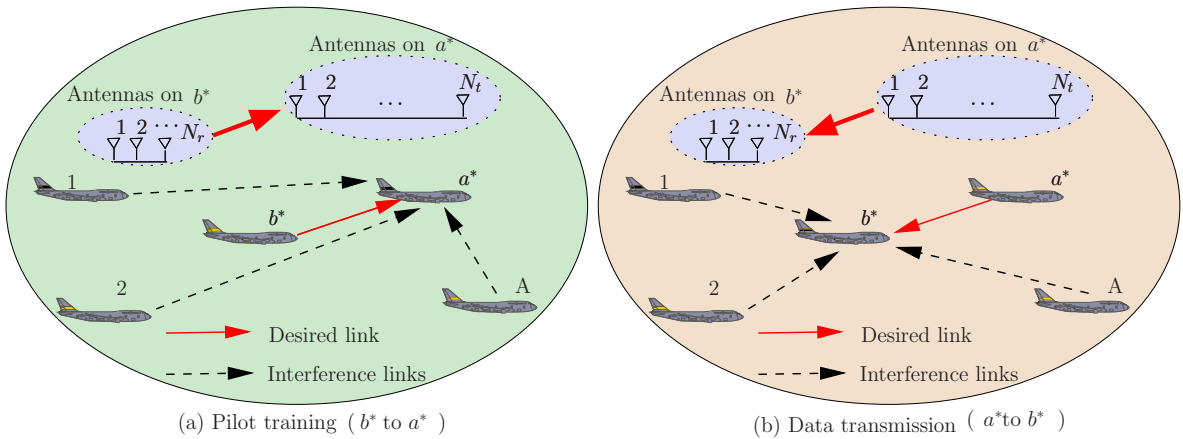}
\end{center}
\vspace*{-5mm}
\caption{A TDD based aeronautical communication model with co-channel interference,
 where aircraft $a^*$ is transmitting data to aircraft $b^*$.}
\label{FIG1}
\vspace*{-2mm}
\end{figure*}

 Fig.~\ref{FIG1} depicts an aeronautical communication system consisting of $(A + 2)$
 aircraft { at their cruising altitudes} within a given communication zone{\footnote{{ Since
 Internet access is forbidden at takeoff and landing, it is reasonable to consider
 only aircraft en-route at cruising altitude.}}}. Aircraft $a^{*}$ and $b^{*}$ are the
 desired pair of communicating aircraft, with $a^{*}$ transmitting data to $b^{*}$,
 while aircraft $a = 1, 2, \cdots, A$ are interfering aircraft. As the system operates
 in SHF band, it is feasible to design compact high-gain antennas
 \cite{parsons2000themobile} and hence to deploy a large-scale antenna array on an
 aircraft. This also enables every pair of communicating aircraft to use the same
 frequency and time slot. Specifically, the system is based on the TDD protocol, and
 each aircraft has $N_{\text{total}}$ antennas, which are capable of transmitting and
 receiving on the same frequency. Furthermore, from these $N_{\text{total}}$ antennas,
 $N_t$ antennas are utilized for transmitting data, while $N_r$ antennas are utilized
 for receiving data. We assume that the number of data-receiving antennas (DRAs) is no
 higher than that of the data-transmitting antennas (DTAs), i.e., {{$N_r \le N_t < 
 N_{\text{total}}$}}. The reasons are: 1)~The spatial degrees of freedom
 $\text{min}\left\{N_r, N_{t}\right\}$ determine the supportable data streams, and
 thus the number of DRAs should be no higher than the number of DTAs in order to make
 sure that the number of data streams after transmit precoding is no higher than
 $\text{min}\left\{N_r, N_{t}\right\}$; and 2)~The remaining $\left(N_{\rm total} -
 N_r\right)$ or $\left(N_{\text{total}} - N_t\right)$ antennas are capable of
 transmitting/receiving other information, such as air traffic control or emergency
 information, at a frequency different from that of the data transmission. The system
 adopts orthogonal frequency-division multiplexing (OFDM) for mitigating the multipath
 effects and for achieving a high SE. We further assume that all the aircraft are jumbo
 jets, and they are all equipped with a same large-scale antenna array. We also assume
 that each aircraft has an airborne radar capable of measuring the distance to nearby
 aircraft. Alternatively, the distance information may be acquired with the aid of GPS.

 As illustrated  in Fig.~\ref{FIG1}\,(b), aircraft $a^{*}$ employs $N_t$ antennas to
 transmit data and aircraft $b^{*}$ employs $N_r$ antennas to receive data. Therefore,
 $a^{*}$ has to know the multi-input multi-output (MIMO) channel matrix linking its
 $N_t$ DTAs to the $N_r$ DRAs of $b^{*}$ in order to carry out transmit precoding. The 
 channel estimation is performed by pilot based training before data transmission, as
 shown in Fig.~\ref{FIG1}\,(a). More specifically, aircraft $b^{*}$ transmits pilot
 symbols from its $N_r$ DRAs to aircraft $a^{*}$'s $N_t$ DTAs to enable $a^{*}$ to
 perform the training based channel estimation. The required MIMO channel matrix can
 then be obtained by exploiting the TDD channel's reciprocity.

\subsection{Pilot based training}\label{S2.1}

 During the pilot based training stage shown in Fig.~\ref{FIG1}\,(a), aircraft $b^{*}$
 transmits its pilot symbols via its $N_{r}$ DRAs to the $N_{t}$ DTAs of
 aircraft $a^{*}$, and this pilot based training is interfered by all adjacent aircraft
 within the communication range. The most serious interference is imposed when the
 interfering aircraft $a = 1, 2, \cdots, A$ also transmit the same pilot symbols as
 aircraft $b^{*}$. The frequency domain (FD) MIMO channel between the $N_r$ DRAs of
 $b^{*}$ and the $N_t$ DTAs of $a^{*}$, on the $n$th OFDM subcarrier of the $s$th symbol,
 is defined by the matrix $\bm{H}_{a^{*}}^{b^{*}}[s,n] \in \mathbb{C}^{N_t\times N_r}$,
 where $0\le n\le N-1$ with $N$ denoting the length of the OFDM block. Let us assume
 that the length of the cyclic prefix (CP) $N_{\text{cp}}$ is longer than the channel
 impulse response (CIR) $P$. The FD discrete signal model at the $n$th subcarrier and
 $s$th time interval during the pilot training period can be written as\setcounter{equation}{0}
\begin{align}\label{eq1}
 & \widetilde{\bm{Y}}_{a^{*}}[s,n] = \sqrt{P_{r,a^{*}}^{b^{*}}} \bm{H}_{a^{*}}^{b^{*}}[s,n]
  \widetilde{\bm{X}}^{b^{*}}[s,n]  
  + \sum\limits_{a=1}^A \sqrt{P_{r,a^{*}}^a} \bm{H}_{a^{*}}^a[s,n]
  \widetilde{\bm{X}}^a[s,n] + \widetilde{\bm{W}}_{a^{*}}[s,n] ,
\end{align}
 where $\widetilde{\bm{Y}}_{a^{*}}[s,n]\! =\!\left[\widetilde{Y}^{a^*}_1[s,n] ~
 \widetilde{Y}^{a^*}_2[s,n]\cdots \widetilde{Y}^{a^*}_{N_t}[s,n]\right]^{\rm T}\!
 \in \mathbb{C}^{N_t\times 1}$ represents the signal vector received by the $N_t$ DTAs of
 $a^{*}$, $\widetilde{\bm{X}}^{b^{*}}[s,n]\! =\! \left[\widetilde{X}^{b^{*}}_1[s,n] ~
 \widetilde{X}^{b^{*}}_2[s,n]\cdots \widetilde{X}^{b^{*}}_{N_r}[s,n]\right]^{\rm T}\!
 \in \mathbb{C}^{N_r\times 1}$ is the transmitted pilot symbol vector, which obeys the
 complex white Gaussian distribution with a zero mean vector and the covariance matrix
 of $\bm{I}_{N_r}$, namely, $\widetilde{\bm{X}}^{b^{*}}[s,n]\sim
 \mathcal{CN}\left(\bm{0},\bm{I}_{N_r}\right)$, and $\widetilde{\bm{W}}_{a^{*}}[s,n]\sim
 \mathcal{CN}\left(\bm{0},\sigma_w^2\bm{I}_{N_t}\right)$ is the associated FD additive
 white Gaussian noise (AWGN) vector, while the worst-case co-channel interference is
 considered with $\widetilde{\bm{X}}^{a}[s,n]=\widetilde{\bm{X}}^{b^{*}}[s,n]$ for
 $1\le a\le A$. Here we have used $\bm{0}$ to represent the zero vector of an appropriate
 dimension and $\bm{I}_{N_r}$ to represent the $(N_r\times N_r)$-element identity matrix.
 Furthermore, in~(\ref{eq1}), $P_{r,a^{*}}^{b^{*}}$ and $P_{r,a^{*}}^a$ denote the
 received powers of the desired signal and the interference signal transmitted from
 $b^*$ and $a$ { at a single receive antenna}, respectively.

 The FD channel transfer function coefficient (FD-CHTFC) matrix $\bm{H}_{a^{*}}^{b^{*}}[s,n]$ 
 is explicitly given by
\begin{align}\label{eq2}
 \bm{H}_{a^{*}}^{b^{*}}[s,n] =& \nu \bm{H}_{\text{d},a^{*}}^{b^{*}}[s,n] +
  \varsigma \bm{H}_{\text{r},a^{*}}^{b^{*}}[s,n] ,
\end{align}
 where $\bm{H}_{\text{d},a^{*}}^{b^{*}}[s,n]\in \mathbb{C}^{N_t\times N_r}$ and
 $\bm{H}_{\text{r},a^{*}}^{b^{*}}[s,n]\in \mathbb{C}^{N_t\times N_r}$ denote the
 deterministic and scattered channel components, respectively, while $\nu =
 \sqrt{\frac{K_{\text{Rice}}}{K_{\text{Rice}}+1}}$ and $\varsigma =
 \sqrt{\frac{1}{K_{\text{Rice}} + 1}}$, with $K_{\text{Rice}}$ being the $K$-factor
 of the Rician channel. Furthermore, the scattered channel component is given by \cite{kim2010spatial}
\begin{align}\label{eq3}
 \bm{H}_{\text{r},a^{*}}^{b^{*}}[s,n] =& \bm{R}_{a^{*}}^{\frac{1}{2}}
 \bm{G}_{a^{*}}^{b^{*}}[s,n] \bm{R}^{b^{*}\frac{1}{2}} ,
\end{align}
 where $\bm{R}_{a^*}\in \mathbb{C}^{N_t\times N_t}$ and $\bm{R}^{b^*}\in
 \mathbb{C}^{N_r\times N_r}$ are the spatial correlation matrices for the $N_t$
 antennas of $a^*$ and the $N_r$ antennas of $b^*$, respectively, while
 $\bm{G}_{a^{*}}^{b^{*}}[s,n]\in \mathbb{C}^{N_t\times N_r}$ has the independently
 identically distributed complex-valued entries, each obeying  the distribution
 $\mathcal{CN}(0,1)$. As a benefit of the CP, the OFDM symbols do not overlap in time
 and the processing can be carried out on a per-carrier basis \cite{zhang2011joint}.
 Hence, to simplify our notations, we will omit the OFDM symbol index $s$ and the
 subcarrier index $n$ in the sequel.

 Let $\bm{vec}(\bm{A})$ denote the column stacking operation applied to matrix $\bm{A}$.
 Clearly, \\ $\mathcal{E}\left\{\bm{vec}\big(\bm{H}_{\text{r},a^{*}}^{b^{*}}\big)\right\}
 =\bm{0}$, where $\mathcal{E}\{ ~\}$ denotes the expectation operator, and the
 covariance matrix $\bm{R}_{\text{r},a^{*}}^{b^*}\in\mathbb{C}^{N_{t}N_{r}\times N_{t}N_{r}}$
 of $\bm{vec}\big(\bm{H}_{\text{r},a^{*}}^{b^{*}}\big)$ is given by\setcounter{equation}{3}
\begin{align}\label{eq4}
 \bm{R}_{\text{r},a^{*}}^{b^*} \! =& \mathcal{E}\! \left\{\! \bm{vec}\left(\bm{H}_{\text{r},a^{*}}^{b^*}\right)
  \! \left(\bm{vec}\left(\bm{H}_{\text{r},a^{*}}^{b^*}\right)\right)^{\rm H}\right\} 
  \! =\!  \bm{R}^{b^*}\! \otimes\! \bm{R}_{a^{*}} \! , \!
\end{align}
 where $\otimes$ denotes the matrix Kronecker product. Therefore,
 $\bm{vec}\left(\bm{H}_{a^{*}}^{b^{*}}\right)\sim \mathcal{CN}\left(
 \bm{vec}\left(\nu\bm{H}_{\text{d},a^{*}}^{b^*}\right),\right.$ $\left.\bm{R}_{\text{r},a^{*}}^{b^*}\right)$.
 Since all the aircraft have the same antenna array, we make the assumption that all
 the spatial correlation matrices $\bm{R}_{a_t}$ for $a_t\in\mathcal{A}=\{1,2,\cdots ,
 A,a^*,b^*\}$ are approximately equal, i.e., we assume that $\bm{R}_{a_t}=\bar{\bm{R}}_t$
 $\forall a_t\in\mathcal{A}$ hold{ \footnote{{ It is reasonable to assume that all 
 jumbo jets are equipped with identical antenna array. However, because the geometric shapes
 of different types of jumbo jets are slightly different, $\bm{R}_{a_t}=\bar{\bm{R}}_t$
 $\forall a_t\in\mathcal{A}$ only hold approximately.}}}.
 By the same argument, we assume that $\bm{R}^{a_r}=\bar{\bm{R}}^r$ $\forall a_r\in
 \mathcal{A}$ hold{ \footnote{{ Similarly, $\bm{R}^{a_r}=\bar{\bm{R}}^r$
 $\forall a_r\in \mathcal{A}$ only hold approximately.}}}.
 Thus, all the channels' covariance matrices are assumed to be equal, i.e., we have
\begin{align}\label{eq5}
 \bm{R}_{\text{r},a_t}^{a_r} =& \bar{\bm{R}}_{\text{r},t}^{r} = \bar{\bm{R}}^r \otimes
  \bar{\bm{R}}_t, \, \forall a_t,a_r\in \mathcal{A} \text{ and } a_t\neq a_r .
\end{align}
 This implies that every aircraft has the knowledge of its channel covariance matrix.
 For example, aircraft $a^*$ knows its antenna array's spatial correlation matrices
 $\bm{R}_{a^*}=\bar{\bm{R}}_t$ and $\bm{R}^{a^*}=\bar{\bm{R}}^r$, and since $\bm{R}^{b^*}
 =\bm{R}^{a^*}$, it knows its channel covariance matrix $\bm{R}_{\text{r},a^{*}}^{b^*}=
  \bm{R}^{b^*}\otimes \bm{R}_{a^{*}}=\bm{R}^{a^*}\otimes \bm{R}_{a^{*}}=\bar{\bm{R}}^r
 \otimes\bar{\bm{R}}_t$. It can be seen that $\bm{R}^{b^*}=\bm{R}^{a^*}$ is the real
 assumption required{ \footnote{{ Alternatively, we can also avoid imposing this
 assumption. Then, $a^*$ can ask $b^*$ to send its antenna correlation matrix
 $\bm{R}^{b^*}$. For example, during the initial handshake of establishing the
 link, $b^*$ can sends $\bm{R}^{b^*}$ to $a^*$ through the signaling at the expense of
 increasing the signaling overhead.}}}.

 The received power $P_{r,a^{*}}^{b^{*}}$ at aircraft $a^{*}$ is linked to the transmitted
 signal power $P_t^{b^{*}}$ of aircraft $b^*$ by the following path loss model
 \cite{parsons2000themobile}
\begin{equation}\label{eq6}
 P_{r,a^{*}}^{b^{*}} = P_{t}^{b^{*}}10^{-0.1 L_{\text{path loss},a^{*}}^{b^{*}}} ,
\end{equation}
 where $L_{\text{path loss},a^{*}}^{b^{*}}$ represents the path loss in dB, which
 can be modeled as \cite{parsons2000themobile}
\begin{equation}\label{eq7}
 L_{\text{path loss},a^{*}}^{b^{*}} \,[\text{dB}] = -154.06 + 20\log_{10}\left(f\right) + 20\log_{10}\left(d\right) ,
\end{equation}
 where $f$ [Hz] is the carrier frequency and $d$ [m] is the distance between the
 transmit antenna and the receive antenna. For the received interference signal
 power $P_{r,a^{*}}^a$, we have a similar path loss model.
 Each entry of the FD AWGN vector $\bm{W}_{a^{*}}$ obeys the distribution
 ${\cal CN}(0,\sigma_w^2)$ with $\sigma_w^2=\frac{P_N}{N}$, in which $P_N$ is the receiver
 noise power given by \cite{sheriff2003mobile}
\begin{equation}\label{eq8}
 P_N = F k T_0 B ,
\end{equation}
 where $F$ [dB] is the receiver's noise figure, $T_0$ is the reference temperature in
  Kelvin at the receiver, $k=1.3 \times 10^{-23}$ is Boltzmann's constant and
 $B$ [Hz] is the bandwidth. 
 
 Since every aircraft knows its channel covariance matrix, we can apply the optimal
 minimum mean square error (MMSE) estimator \cite{kay2003fundamentals} to estimate
 the channel matrix $\bm{H}_{a^{*}}^{b^{*}}$. The MMSE estimate
 $\widehat{\bm{H}}_{a^{*}}^{b^{*}}$ of $\bm{H}_{a^{*}}^{b^{*}}$ is given by
\begin{align}\label{eq9}
  &\bm{vec}\left(\widehat{\bm{H}}_{a^*}^{b^*}\right) = \bm{vec}\left(\nu\bm{H}_{\text{d},a^*}^{b^*}\right)
  + \varsigma^2\bar{\bm{R}}_{\text{r},t}^r \left(\frac{\sigma_w^2}{P_{r,a^*}^{b^*}} \bm{I}_{N_r N_t}
  + \varsigma^2\bar{\bm{R}}_{\text{r},t}^r
  + \sum\limits_{a=1}^A \frac{P_{r,a^*}^a}{P_{r,a^*}^{b^*}} \varsigma^2 \bar{\bm{R}}_{\text{r},t}^r \right)^{-1} 
  \nonumber \\ &\hspace*{10mm} \times \left( \bm{vec}\left(\varsigma \bm{H}_{\text{r},a^*}^{b^*}\right)
  + \sum\limits_{a=1}^A\sqrt{\frac{P_{r,a^*}^a}{P_{r,a^*}^{b^*}}}
  \bm{vec}\left(\varsigma\bm{H}_{\text{r},a^*}^a\right) 
  + \frac{1}{\sqrt{P_{r,a^*}^{b^*}}} \bm{vec}\left(\widetilde{\bar{\bm{W}}}_{a^*}
  \Big(\widetilde{\bar{\bm{X}}}^{b^*}\Big)^{\rm H} \right) \right) .
\end{align}
 where $\widetilde{\bar{\bm{X}}}^{b^*}\! \in\!
 \mathbb{C}^{N_r\times N_r}$ consists of the $N_r$ pilot symbols with
 $\widetilde{\bar{\bm{X}}}^{b^*}\big(\widetilde{\bar{\bm{X}}}^{b^*}\big)^{\rm H}
 \! =\! \bm{I}_{N_r}$, and $\widetilde{\bar{\bm{W}}}_{a^*}\! \in\! \mathbb{C}^{N_r\times N_r}$
 is the corresponding AWGN vector over the $N_{r}$ consecutive OFDM pilot symbols.
 It is well-known that the distribution of the MMSE estimator (\ref{eq9}) is
 $\bm{vec}\big(\widehat{\bm{H}}_{a^{*}}^{b^{*}}\big)\sim \mathcal{CN}\left(
 \bm{vec}\left(\nu\bm{H}_{\text{d},a^{*}}^{b^*}\right),\right.$ $\left.\bm{\Phi}_{a^{*}}^{b^{*}}\right)$ 
 \cite{kay2003fundamentals}, that is, $\bm{vec}\left(\widehat{\bm{H}}_{a^{*}}^{b^{*}}\right)$
 is an unbiased estimate of $\bm{vec}\left(\bm{H}_{a^{*}}^{b^*}\right)$ with
 the estimation accuracy specified by the covariance matrix $\bm{\Phi}_{a^{*}}^{b^{*}}\in
 \mathbb{C}^{N_{t}N_{r}\times N_{t}N_{r}}$, which is given by \setcounter{equation}{9}
\begin{align}\label{eq10}
 \bm{\Phi}_{a^*}^{b^*} =& \varsigma^2 \bar{\bm{R}}_{\text{r},t}^r \left(
  \frac{\sigma_w^2}{P_{r,a^{*}}^{b^{*}}} \bm{I}_{N_r N_t} \! +\! \varsigma^2 \bar{\bm{R}}_{\text{r},t}^r
  \! + \! \sum\limits_{a=1}^A \frac{P_{r,a^{*}}^a}{P_{r,a^{*}}^{b^{*}}} \varsigma^2
  \bar{\bm{R}}_{\text{r},t}^r\right)^{-1} 
  \varsigma^2 \bar{\bm{R}}_{\text{r},t}^r .
\end{align}

\subsection{Data transmission}\label{S2.2}

 During the data transmission, aircraft $a^*$ transmits the symbols $\bm{X}^{a^*}=\big[
 X_1^{a*} ~ X_2^{a*}\cdots X_{N_r}^{a*}\big]^{\rm T}\in\mathbb{C}^{N_r\times 1}$ from
 its $N_t$ DTAs to the $N_r$ DRAs of aircraft $b^*$. Let us denote  the MIMO channel
 matrix during this data transmission as $\bm{H}_{b^*}^{a^*}\in \mathbb{C}^{N_r\times N_t}$.
 Then, upon exploiting the channel's reciprocity in TDD systems, we have $\bm{H}_{b^*}^{a^*}
 =\left(\bm{H}_{a^*}^{b^*}\right)^{\rm H}$.

 To mitigate the interference between multiple antennas, transmit precoding (TPC) is
 adopted for data transmission. There are various methods of designing the TPC matrix
 $\bm{V}_{b^*}^{a^*}\in\mathbb{C}^{N_t\times N_r}$, including the convex
 optimization based method of \cite{gershman2010convex}, the minimum variance method of
 \cite{lorenz2005robust}, the minimum bit-error rate (MBER) design of \cite{Yao2009mberpc},
 the MMSE design of \cite{Vojcic1998mmsepc} and the zero-forcing (ZF) design as well
 as the eigen-beamforming or matched filter (MF) design of  \cite{hoydis2013massive}.
 For a large-scale MIMO system, the complexity of the optimization based TPC designs of
 \cite{gershman2010convex,lorenz2005robust,Yao2009mberpc} may become excessive.
 Additionally, in this case, the performance of the MBER design \cite{Yao2009mberpc} is
 indistinguishable from the MMSE one. Basically, for large-scale antenna array based MIMO,
 the performance of the MMSE, ZF and MF based TPC solutions are sufficiently good. The
 MF TPC design offers the  additional advantage of the lowest complexity and, therefore,
 it is chosen in this work. Specifically, the MF TPC matrix is given by \setcounter{equation}{10}
\begin{equation}\label{eq11}
 \bm{V}_{b^*}^{a^*} = \left(\widehat{\bm{H}}_{b^*}^{a^*}\right)^{\rm H} = \widehat{\bm{H}}_{a^*}^{b^*} ,
\end{equation}
 where $\widehat{\bm{H}}_{b^*}^{a^*}$ denotes the estimate of $\bm{H}_{b^*}^{a^*}$,
 and $\widehat{\bm{H}}_{a^*}^{b^*}$ is the channel estimate obtained during pilot
 training.

 In the presence of the interference imposed by aircraft $a$ for $1\le a\le A$,
 the received signal vector at aircraft $b^*$, $\bm{Y}_{b^*}=\big[Y_1^{b^*} ~ Y_2^{b^*}
 \cdots Y_{N_r}^{b^*} \big]^{\rm T}\in\mathbb{C}^{N_r\times 1}$, can be written as
\begin{align}\label{eq12}
 \bm{Y}_{b^*} =& \sqrt{P_{r,b^*}^{a^*}}\bm{H}_{b^*}^{a^*}\bm{V}_{b^*}^{a^*}
  \bm{X}^{a^*} \! +\! \sum\limits_{a=1}^A \sqrt{P_{r,b^*}^a}\bm{H}_{b^*}^a\bm{V}_{b^a}^a \bm{X}^a
  \! +\! \bm{W}_{b^*} ,
\end{align}
 where $\bm{V}_{b^a}^a\in\mathbb{C}^{N_t\times N_r}$ denotes the TPC matrix at
 aircraft $a$ transmitting the signal $\bm{X}^a=\big[X_1^a ~ X_2^a \cdots
 X_{N_r}^a\big]^{\rm T}$ to its desired receiving aircraft $b^a$ for $b^a\ne b^*$, and 
 the channel's AWGN vector is $\bm{W}_{b^*}=\big[W_1^{b^*} ~ W_2^{b^*} \cdots
 W_{N_r}^{b^*}\big]^{\rm T}\sim \mathcal{CN}\big(\bm{0},\sigma_w^2\bm{I}_{N_r}\big)$.
 In particular, the signal received at the antenna $n_r^{*}$ of aircraft $b^*$, for
 $1\le n_r^*\le N_r$, is given by 
\begin{align}\label{eq13}
 Y^{b^*}_{n_r^{*}} =& \sqrt{P_{r,b^*}^{a^*}} \left[\bm{H}_{b^*}^{a^*}\right]_{[n_r^{*}:~]}
  \bm{V}_{b^*}^{a^*} \bm{X}^{a^*} + \sum\limits_{a=1}^A \sqrt{P_{r,b^*}^a}
  \left[\bm{H}_{b^*}^a\right]_{[n_r^{*}:~]} \bm{V}_{b^a}^a \bm{X}^a + W^{b^*}_{n_r^{*}} \nonumber \\ &
  = \sqrt{P_{r,b^*}^{a^*}} \left[\bm{H}_{b^*}^{a^*}\right]_{[n_r^{*}:~]}
  \left[\bm{V}_{b^*}^{a^*}\right]_{[~:n_r^{*}]} X_{n_r^{*}}^{a^*} 
  + \sum\limits_{n_r \neq n_r^{*}}\sqrt{P_{r,b^*}^{a^*}} \left[\bm{H}_{b^*}^{a^*}\right]_{[n_r^{*}:~]}
  \left[\bm{V}_{b^*}^{a^*}\right]_{[~:n_r]} X_{n_r}^{a^*} \nonumber \\ &
  + \sum\limits_{a=1}^A\sum\limits_{n_r = 1}^{N_{r}}\sqrt{P_{r,b^*}^a}\left[\bm{H}_{b^*}^a\right]_{[n_r^{*}:~]}
  \left[\bm{V}_{b^a}^a\right]_{[~:n_r]} X_{n_r}^{a} + W^{b^*}_{n_r^{*}} ,  
\end{align}
 where
 $[\bm{A}]_{[n_r:~]}\in\mathbb{C}^{1\times M}$ denotes the $n_r$th row of
 $\bm{A}\in\mathbb{C}^{N\times M}$ and $[\bm{A}]_{[~:n_r]}\in\mathbb{C}^{N\times 1}$
 denotes the $n_r$th column of $\bm{A}$.

\section{Proposed adaptive coding and modulation}\label{S3}

 In order to attain the required high-throughput transmission for our large-scale antenna
 assisted aeronautical communication system, we propose a distance-based ACM scheme. We
 begin by analyzing the system's achievable throughput, followed by the detailed design
 of this distance-based ACM.

\subsection{Achievable throughput analysis}\label{S3.1}

 The transmitting aircraft $a^*$ pre-codes its signals based on the channel estimate
 obtained during the pilot training by exploiting the TDD channel's reciprocity.
 However, the receiving aircraft $b^*$ does not have this CSI. Thus, the ergodic
 achievable rate is adopted for analyzing the achievable throughput. In order to explicitly
 derive this capacity, we rewrite the received signal at the antenna $n_r^*$ of aircraft
 $b^*$, namely, $Y^{b^*}_{n_r^{*}}$ of (\ref{eq13}), in the form given in (\ref{eq14}).

\begin{align}\label{eq14}
 Y^{b^*}_{n_r^*}\! =& \mathcal{E}\! \left\{\! \sqrt{P_{r,b^*}^{a^*}} \left[\bm{H}_{b^*}^{a^*}\right]_{[n_r^{*}:~]}
  \!\! \left[\bm{V}_{b^*}^{a^*}\right]_{[~:n_r^{*}]}\right\} \! X_{n_r^{*}}^{a^*} \!
  + \! \left(\! \sqrt{P_{r,b^*}^{a^*}} \left[\bm{H}_{b^*}^{a^*}\right]_{[n_r^{*}:~]} \!\!
  \left[\bm{V}_{b^*}^{a^*}\right]_{[~:n_r^{*}]} \right. \!\! \nonumber \\
 &-\! \left.\mathcal{E}\! \left\{\! \sqrt{P_{r,b^*}^{a^*}}
  \left[\bm{H}_{b^*}^{a^*}\right]_{[n_r^{*}:~]} \!\!
  \left[\bm{V}_{b^*}^{a^*}\right]_{[~:n_r^{*}]}\right\}\! \right)\! X_{n_r^{*}}^{a^*} + \sum\limits_{n_r \neq n_r^{*}}\sqrt{P_{r,b^*}^{a^*}} \left[\bm{H}_{b^*}^{a^*}\right]_{[n_r^{*}:~]}
  \left[\bm{V}_{b^*}^{a^*}\right]_{[~:n_r]} X_{n_r}^{a^*}  \nonumber \\
 &  +  \sum\limits_{a=1}^A\sum\limits_{n_r = 1}^{N_{r}} 
  \sqrt{P_{r,b^*}^a}\left[\bm{H}_{b^*}^a\right]_{[n_r^{*}:~]}
  \left[\bm{V}_{b^a}^a\right]_{[~:n_r]} X_{n_r}^{a} \! +\!  W^{b^*}_{n_r^{*}}  . 
\end{align}
 { Observe that the first term of (\ref{eq14}) is the desired signal, the second term is the 
 interference caused by the channel estimation error, the third term is the interference
 arriving from the other antennas of $b^*$, and the fourth term is the interference impinging from the 
 interfering aircraft, while the last term is of course the noise.} Thus, the SINR
 at the $n_r$-th antenna of $b^*$, denoted by $\gamma_{b^{*},n_{r}}^{a^{*}}$, is given by\setcounter{equation}{14}
\begin{align}\label{eq15}
 \gamma_{b^{*},n_{r}}^{a^{*}} =& \frac{P_{r,b^*}^{a^*} \left|\mathcal{E}\left\{
  \left[\bm{H}_{b^*}^{a^*}\right]_{[n_r^{*}:~]}
 \left[\bm{V}_{b^*}^{a^*}\right]_{[~:n_r^{*}]}\right\}\right|^2}
  {P_{{\text{I\&N}}_{b^*,n_r}^{a^*}}} ,
\end{align}
 in which the power of the interference pluse noise is
\begin{align}\label{eq16}
 & P_{{\text{I\&N}}_{b^*,n_r}^{a^*}} = \sigma_w^2 + P_{r,b^*}^{a^*} \text{Var}\left\{
  \left[\bm{H}_{b^*}^{a^*}\right]_{[n_r^{*}:~]}
  \left[\bm{V}_{b^*}^{a^*}\right]_{[~:n_r^{*}]}\right\} 
 + P_{r,b^*}^{a^*} \sum\limits_{n_r\neq n_r^*}\mathcal{E}\left\{\left| 
  \left[\bm{H}_{b^*}^{a^*}\right]_{[n_r^{*}:~]} 
  \left[\bm{V}_{b^*}^{a^*}\right]_{[~:n_r]}\right|^{2}\right\} \nonumber \\
 & \hspace*{10mm}+ \sum\limits_{a=1}^A P_{r,b^*}^a \sum\limits_{n_r=1}^{N_r}\mathcal{E}\left\{\left|
 \left[\bm{H}_{b^*}^a\right]_{[n_r^{*}:~]}\left[\bm{V}_{b^a}^a\right]_{[~:n_r]}\right|^{2}\right\} ,
\end{align}
 where $\text{Var}\left\{~\right\}$ denoting the variance. The achievable transmission
 rate per antenna between the transmitting aircraft $a^{*}$ and destination aircraft
 $b^{*}$ can readily be obtained as
\begin{align}\label{eq17}
 C_{b^{*}}^{a^{*}} =& \frac{1}{N_{r}}\sum\limits_{n_{r} = 1}^{N_{r}}\log_{2}\left(1
  + \gamma_{b^{*},n_{r}}^{a^{*}}\right) .
\end{align}

 As mentioned previously, the distance between the transmitting aircraft and the receiving 
 aircraft is available with the aid of airborne radar or GPS. But we do not require that
 the distances between the interfering aircraft and the desired destination aircraft are
 known to the transmitting aircraft. Realistically, the distance between two aircraft can 
 be assumed to follow the uniform distribution within the range of $\left[D_{\min}, ~
 D_{\max}\right]$, where $D_{\min}$ is the minimum separation distance required by safety 
 and $D_{\max}$ is the maximum communication range \cite{sakhaee2006global}. For example,
 we have $D_{\max}=400$\,nautical miles, which is approximately 740.8\,km, for a typical
 cruising altitude of 10.68 km. Normally, the International Civil Aviation Organization
 prescribes the minimum separation as 5 nautical miles (approximately 9.26\,km) when
 surveillance systems are in use. But the minimum separation distance could be reduced to
 2.5 nautical miles (about 4.63 \,km) when surveillance radars are intensively deployed,
 such as in an airport's airspace. Thus, the minimum distance is set to $D_{\min} = 5$\,km
 in the envisaged AANET. Intuitively, an aircraft always transmits its signal to an aircraft
 having the best propagation link with it for relying its information in the AANET. Here we
 simply assume that a pair of aircraft having the shortest communication distance have the
 best propagation link, since large-scale fading dominates the quality of propagation in
 aeronautical communication.
 Being in mind (\ref{eq6}) and (\ref{eq7}) as well as the fact that the distance $d$ is
 uniformly distributed in $\left[D_{\min}, ~ D_{\max}\right]$, we can express the average
 received signal power for the transmission from aircraft $a$ to aircraft $b^*$ as
\begin{align}\label{eq18}
 \bar{P}_r =& \bar{P}_{r,b^{*}}^{a} = \mathcal{E}\left\{P_{r,b^{*}}^{a}\right\}
  = P_{t}^{a} \frac{10^{15.406}}{f^2} \frac{1}{D_{\max}D_{\min}} .
\end{align}

 The relationship between the MMSE estimate $\left[\widehat{\bm{H}}_{b^*}^{a^*}\right]_{[n_r:~]}$
 and the true MIMO channel \\ $\left[\bm{H}_{b^*}^{a^*}\right]_{[n_r:~]}$ can be expressed as
\begin{align}\label{eq19}
 \left[\bm{H}_{b^*}^{a^*}\right]_{[n_r:~]} =& \left[\widehat{\bm{H}}_{b^*}^{a^*}\right]_{[n_r:~]}
  + \left[\widetilde{\bm{H}}_{b^*}^{a^*}\right]_{[n_r:~]} ,
\end{align}
 where $\left[\widetilde{\bm{H}}_{b^*}^{a^*}\right]_{[n_r:~]}$ denotes the estimation error,
 which is statistically independent of both \\ $\left[\widehat{\bm{H}}_{b^*}^{a^*}\right]_{[n_r:~]}$
 and $\left[\bm{H}_{b^*}^{a^*}\right]_{[n_r:~]}$ \cite{kay2003fundamentals}. Similar to the
 distribution of $\bm{vec}\left(\widehat{\bm{H}}_{a^*}^{b^*}\right)$, we have
\begin{align}\label{eq20}
  & \bm{vec}\left(\widehat{\bm{H}}_{b^{*}}^{a^{*}}\right)\sim
 \mathcal{CN}\left(\bm{vec}\left(\nu\bm{H}_{\text{d},b^{*}}^{a^*}\right),
 \bm{\Phi}_{b^{*}}^{a^{*}}\right) ,
\end{align}
 where { the covariance matrix $\bm{\Phi}_{b^{*}}^{a^{*}}$ of the MMSE estimate
 $\bm{vec}\left(\widehat{\bm{H}}_{b^{*}}^{a^{*}}\right)$} is given by.
\begin{align}\label{eq21}
  \bm{\Phi}_{b^*}^{a^*} =& \varsigma^2 \bar{\bm{R}}_{\text{r},r}^t \left(
  \frac{\sigma_w^2}{P_{r,b^{*}}^{a^{*}}} \bm{I}_{N_r N_t} \! +\! \varsigma^2 \bar{\bm{R}}_{\text{r},r}^t
  \! + \! \sum\limits_{a=1}^A \frac{P_{r,a^{*}}^a}{P_{r,b^{*}}^{a^{*}}} \varsigma^2
  \bar{\bm{R}}_{\text{r},r}^t\right)^{-1} 
  \varsigma^2 \bar{\bm{R}}_{\text{r},r}^t ,
\end{align}
 { in which $\varsigma^2\bar{\bm{R}}_{\text{r},r}^t$ is the channel's covariance matrix and
 the channel's spatial correlation matrix} $\bar{\bm{R}}_{\text{r},r}^t$ is given by
\begin{align}\label{eq22}
 \bar{\bm{R}}_{\text{r},r}^t =& \bar{\bm{R}}_t \otimes
  \bar{\bm{R}}^r.
\end{align}
 { Finally, the covariance matrix of the channel estimation error obeys 
 $\bm{vec}\left(\widetilde{\bm{H}}_{b^*}^{a^*}\right)=\bm{vec}\left(\bm{H}_{b^*}^{a^*}\right)-
 \bm{vec}\left(\widehat{\bm{H}}_{b^*}^{a^*}\right)$ by $\bm{\Xi}_{b^*}^{a^*}$. Clearly, we have}
\begin{align}\label{eq23}
 \bm{\Xi}_{b^*}^{a^*} =& \varsigma^2\bar{\bm{R}}_{\text{r},r}^t -
 \bm{\Phi}_{b^{*}}^{a^{*}} \in \mathbb{C}^{N_tN_r \times N_tN_r} ,
\end{align}
 and $\bm{\Xi}_{b^*}^{a^*}$ can be explicitly expressed in the following form
\begin{align}\label{eq24}
 \bm{\Xi}_{b^*}^{a^*} =& \left[\begin{array}{cccc} \left[\bm{\Xi}_{b^*}^{a^*}\right]_{(1,1)} &
  \left[\bm{\Xi}_{b^*}^{a^*}\right]_{(1,2)} & \cdots & \left[\bm{\Xi}_{b^*}^{a^*}\right]_{(1,N_r)} \\
  \vdots & \vdots & \cdots & \vdots \\
  \left[\bm{\Xi}_{b^*}^{a^*}\right]_{(N_r,1)} & \left[\bm{\Xi}_{b^*}^{a^*}\right]_{(N_r,2)} & \cdots
  & \left[\bm{\Xi}_{b^*}^{a^*}\right]_{(N_r,N_r)}
\end{array}\right] \! ,
\end{align}
 where { $\left[\bm{\Xi}_{b^*}^{a^*}\right]_{(i,j)}=\mathcal{E}\left\{
\left[\widetilde{\bm{H}}_{b^*}^{a^*}\right]_{[i:~]}^{\rm H}
 \left[\widetilde{\bm{H}}_{b^*}^{a^*}\right]_{[j:~]}\right\}\in
 \mathbb{C}^{N_t\times N_t}$,} $\forall i,j\in \{1,2,\cdots, N_r\}$. This indicates
 that the distribution of $\left[\widetilde{\bm{H}}_{b^*}^{a^*}\right]_{[n_r:~]}$
 is given by
\begin{align}\label{eq25}
 \left[\widetilde{\bm{H}}_{b^*}^{a^*}\right]_{[n_r:~]}^{\rm T} \sim &
 \mathcal{CN}\left(\bm{0}_{N_t\times 1},\left[\bm{\Xi}_{b^*}^{a^*}\right]_{(n_r,n_r)}\right) ,
\end{align}
 where $\bm{0}_{N_t\times 1}$ is the $N_t$-dimensional zero vector. 

 { Bearing in mind the distribution (\ref{eq20}), the correlation matrix obeys $\mathcal{E}\left\{\bm{vec}\left(
 \widehat{\bm{H}}_{a^{*}}^{b^{*}}\right) \right.$ \\ $\left.\bm{vec}\left(\widehat{\bm{H}}_{a^{*}}^{b^{*}}\right)^{\rm H}\right\}
 =\nu^2\bm{M}_{b^{*}}^{a^{*}}+\bm{\Phi}_{b^{*}}^{a^{*}}$,} where we have
 \begin{align}\label{eq26}
 \bm{M}_{b^{*}}^{a^{*}} =& \bm{vec}\left(\bm{H}_{\text{d},a^{*}}^{b^{*}}\right)
  \bm{vec}\left(\bm{H}_{\text{d},a^{*}}^{b^{*}}\right)^{\rm H} \in \mathbb{C}^{N_tN_r\times N_tN_r} .
\end{align}
 Furthermore, $\bm{M}_{b^{*}}^{a^{*}}$ can be expressed in a form similar to (\ref{eq24}) having
 the $(i,j)$-th sub-matrix { $\left[\bm{M}_{b^{*}}^{a^{*}}\right]_{(i,j)}=\left[
 \bm{H}_{\text{d},a^{*}}^{b^{*}}\right]_{[i:~]}^{\rm H}\left[\bm{H}_{\text{d},a^{*}}^{b^{*}}\right]_{[j:~]}
 \in\mathbb{C}^{N_t\times N_t}$,} $\forall i,j\in \{1,2,\cdots, N_r\}$.

 Hence, upon recalling (\ref{eq11}) and (\ref{eq19}), we have
\begin{align}\label{eq27}
 & \mathcal{E}\! \left\{\! \left[\bm{H}_{b^*}^{a^*}\right]_{[n_r^{*}:~]}
  \left[\bm{V}_{b^*}^{a^*}\right]_{[~:n_r^{*}]}\! \right\} =  \mathcal{E}\! \left\{\! 
  \left[\bm{H}_{b^*}^{a^*}\right]_{[n_r^{*}:~]}
  \left[\widehat{\bm{H}}_{b^*}^{a^*}\right]_{[n_r^{*}:~]}^{\rm H}\! \right\}  \nonumber \\
 & \hspace*{6mm} = \mathcal{E}\left\{\left(\left[\widehat{\bm{H}}_{b^*}^{a^*}\right]_{[n_r^{*}:~]} +
  \left[\widetilde{\bm{H}}_{b^*}^{a^*}\right]_{[n_r^{*}:~]}\right)
  \left[\widehat{\bm{H}}_{b^*}^{a^*}\right]_{[n_r^{*}:~]}^{\rm H} \right\} \nonumber \\ & \hspace*{6mm}
  = \mathcal{E}\left\{ \text{Tr}\left\{ \left[\widehat{\bm{H}}_{b^*}^{a^*}\right]_{[n_r^{*}:~]}^{\rm H}
  \left[\widehat{\bm{H}}_{b^*}^{a^*}\right]_{[n_r^{*}:~]}\right\} \right\} \nonumber \\
 & \hspace*{6mm} = \text{Tr}\left\{\nu^2\left[\bm{M}_{b^{*}}^{a^{*}}\right]_{(n_r^{*},n_r^{*})}
  + \left[\bm{\Phi}_{b^{*}}^{a^{*}}\right]_{(n_r^{*},n_r^{*})}\right\} ,
\end{align}
 where $\text{Tr}\{ ~\}$ denotes the matrix trace operator, and
 $\left[\bm{\Phi}_{b^{*}}^{a^{*}}\right]_{(i,j)}\in\mathbb{C}^{N_t\times N_t}$,
 $\forall i,j\in \{1,2,\cdots, N_r\}$, is the $(i,j)$-th sub-matrix of $\bm{\Phi}_{b^{*}}^{a^{*}}$
 which has a structure similar to (\ref{eq24}). Thus, by denoting
\begin{align}\label{eq28}
 \left[\bm{\Theta}_{b^{*}}^{a^{*}}\right]_{(n_r^{*},n_r^{*})} =& \nu^2
  \left[\bm{M}_{b^{*}}^{a^{*}}\right]_{(n_r^{*},n_r^{*})} + \left[\bm{\Phi}_{b^{*}}^{a^{*}}\right]_{(n_r^{*},n_r^{*})} ,
\end{align}
 we have
\begin{align}\label{eq29}
 \left|\mathcal{E}\left\{\left[\bm{H}_{b^*}^{a^*}\right]_{[n_r^{*}:~]}
  \left[\bm{V}_{b^*}^{a^*}\right]_{[~:n_r^{*}]}\right\}\right|^{2}\! =&
 \left(\text{Tr}\left\{\left[\bm{\Theta}_{b^{*}}^{a^{*}}\right]_{(n_r^{*}, n_r^{*})}
  \right\}\right)^2 \! .
\end{align}
 { Note that multiplying (\ref{eq29}) with $P_{r,b^*}^{a^*}$ leads to the desired signal
 power, i.e. the numerator of the SINR expression (\ref{eq15}).}

 As shown in the Appendix, as $N_t\to \infty$, we have\setcounter{equation}{29}
\begin{align}\label{eq30}
 & \text{Var}\left\{\left[\bm{H}_{b^*}^{a^*}\right]_{[n_r^{*}:~]}
  \left[\bm{V}_{b^*}^{a^*}\right]_{[~:n_r^{*}]}\right\} 
 =\text{Tr}\left\{\left[\bm{\Xi}_{b^{*}}^{a^{*}}\right]_{(n_r^{*}, n_r^{*})}
  \left[\bm{\Theta}_{b^{*}}^{a^{*}}\right]_{(n_r^{*}, n_r^{*})} \right\} .
\end{align}
 Additionally, recalling (\ref{eq9}) and after some simplifications, we can express \\ 
 $\mathcal{E}\Big\{\left|\left[\bm{H}_{b^*}^{a^*}\right]_{[n_r^{*}:~]}
 \left[\bm{V}_{b^*}^{a^*}\right]_{[~:n_r]}\right|^{2}\Big\}$ as
\begin{align}\label{eq31}
 & \mathcal{E}\left\{\left|\left[\bm{H}_{b^*}^{a^*}\right]_{[n_r^{*}:~]}
  \left[\bm{V}_{b^*}^{a^*}\right]_{[~:n_r]}\right|^{2}\right\} 
  = \text{Tr}\Bigg\{\bigg(\nu^2\left[\bm{M}_{b^{*}}^{a^{*}}\right]_{(n_r,n_r)} 
 + \Big(\left[\bm{\Phi}_{b^{*}}^{a^{*}}\right]_{(n_r,n_r)} +
  \frac{\sigma_w^2}{P_{r,b^{*}}^{a^{*}}}\bm{I}_{N_t} \nonumber \\
 & \hspace*{1mm}+
  \sum\limits_{a=1}^A\frac{P_{r,a^{*}}^a}{P_{r,b^{*}}^{a^{*}}}\varsigma^2
  \left[\bar{\bm{R}}_{\text{r},r}^t\right]_{(n_r,n_r)} \Big)  \left[\bm{\Omega}_{b^{*}}^{a^{*}}\right]_{(n_r,n_r)} \Bigg) 
  \bigg(\nu^2\left[\bm{M}_{b^{*}}^{a^{*}}\right]_{(n_r^{*},n_r^{*})} +
  \varsigma^2\left[\bar{\bm{R}}_{\text{r},r}^t\right]_{(n_r^{*},n_r^{*})} \bigg)\Bigg\} ,
\end{align}
 asymptotically, where $\left[\bm{\Omega}_{b^{*}}^{a^{*}}\right]_{(n_r,n_r)}\in
\mathbb{C}^{N_t\times N_t}$ is given by
\begin{align}\label{eq32}
  \left[\bm{\Omega}_{b^{*}}^{a^{*}}\right]_{(n_r,n_r)} =& \varsigma^2\left[\bar{\bm{R}}_{\text{r},r}^t\right]_{(n_r,n_r)}
  \Big(\frac{\sigma_w^2}{P_{r,b^{*}}^{a^{*}}}\bm{I}_{N_t}
  + \left[\bar{\bm{R}}_{\text{r},r}^t\right]_{(n_r,n_r)} 
  + \sum\limits_{a=1}^A\frac{P_{r,a^{*}}^a}{P_{r,b^{*}}^{a^{*}}}\varsigma^2
  \left[\bar{\bm{R}}_{\text{r},r}^t\right]_{(n_r,n_r)} \Big)^{-1} ,
\end{align}
 in which $\left[\bar{\bm{R}}_{\text{r},r}^t\right]_{(i,j)}\in\mathbb{C}^{N_t\times N_t}$,
 $\forall i,j\in \{1,2,\cdots, N_r\}$, represents the $(i,j)$-th sub-matrix of
 $\bar{\bm{R}}_{\text{r},r}^t$, which has a structure similar to (\ref{eq24}).
 Similarly, we can express $\mathcal{E}\left\{\left|\left[\bm{H}_{b^*}^a\right]_{[n_r^{*}:~]}
 \left[\bm{V}_{b^a}^a\right]_{[~:n_r]}\right|^{2}\right\}$ asymptotically as
\begin{align}\label{eq33}
 & \mathcal{E}\left\{\left|\left[\bm{H}_{b^*}^a\right]_{[n_r^{*}:~]}\left[\bm{V}_{b^a}^a\right]_{[~:n_r]}\right|^2\right\}
  = \text{Tr}\Bigg\{\Bigg(\nu^2\left[\bm{M}_{b^{a}}^a\right]_{(n_r,n_r)} 
 + \! \Big(\left[\bm{\Phi}_{b^a}^a\right]_{(n_r,n_r)} +
  \sum\limits_{a=1}^A\frac{P_{r,a^*}^a\varsigma^2}{P_{r,b^*}^{a^*}}
  \left[\bar{\bm{R}}_{\text{r},r}^t\right]_{(n_r,n_r)}\!\! \nonumber \\ & \hspace*{1mm}+\! \frac{\sigma_w^2}{P_{r,b^{*}}^{a^{*}}}
  \bm{I}_{N_t}\! \Big)   
 \left[\bm{\Omega}_{b^a}^a\right]_{(n_r,n_r)} \! \Bigg) \left(\nu^2
  \left[\bm{M}_{b^{*}}^a\right]_{(n_r^{*},n_r^{*})} \! + \! \varsigma^2
  \left[\bar{\bm{R}}_{\text{r},r}^t\right]_{(n_r^{*},n_r^{*})} \right)\! \Bigg\} .
\end{align}

 Upon substituting (\ref{eq30}), (\ref{eq31}) and (\ref{eq33}) into (\ref{eq16}),
 therefore, we arrive asymptotically at the power of the interference plus noise
 $P_{{\text{I\&N}}_{b^*,n_r}^{a^*}}$ given by
\begin{align}\label{eq34}
 & P_{{\text{I\&N}}_{b^*,n_r}^{a^*}} = P_{r,b^{*}}^{a^{*}}
  \text{Tr}\left\{\left[\bm{\Xi}_{b^{*}}^{a^{*}}\right]_{(n_r^{*},n_r^{*})}
  \left[\bm{\Theta}_{b^{*}}^{a^{*}}\right]_{(n_r^{*},n_r^{*})}\right\} + P_{r,b^{*}}^{a^{*}}
  \sum\limits_{\begin{subarray}{ll}n_r = 1 \\ n_r \neq n_r^{*}\end{subarray}}^{N_r}
  \text{Tr}\Bigg\{\bigg(\nu^2\left[\bm{M}_{b^{*}}^{a^{*}}\right]_{(n_r,n_r)} +\! \Big(\!\! \left[\bm{\Phi}_{b^{*}}^{a^{*}}\right]_{(n_r,n_r)}\! \nonumber \\
 & \hspace*{2mm}
  +\!
 \frac{\sigma_w^2}{P_{r,b^{*}}^{a^{*}}}\bm{I}_{N_t}+\! \frac{A \bar{P}_r}{P_{r,b^{*}}^{a^{*}}}\varsigma^2\left[\bar{\bm{R}}_{\text{r},r}^t\right]_{(n_r, n_r)}
  \! \Big)\!\! \left[\bm{\Omega}_{b^{*}}^{a^{*}}\right]_{(n_r,n_r)}\! \bigg)\! \Big(\! \nu^2\! \left[\bm{M}_{b^{*}}^{a^{*}}
  \right]_{(n_r^{*},n_r^{*})}\! +\! \varsigma^2\left[\bar{\bm{R}}_{\text{r},r}^t\right]_{(n_r^{*},n_r^{*})}
  \! \Big) \!\! \Bigg\} \! \nonumber \\
 & \hspace*{2mm}   
  +\! \bar{P}_r\sum\limits_{a=1}^A\sum\limits_{n_r=1}^{N_r}\text{Tr}\Bigg\{\!\!\bigg(\! \nu^2
  \left[\bm{M}_{b^a}^a\right]_{(n_r,n_r)} \!+\! \Big(\left[\bm{\Phi}_{b^a}^a\right]_{(n_r,n_r)}\! +\! \frac{A \bar{P}_r}{P_{r,b^{*}}^{a^{*}}}
  \varsigma^2\left[\bar{\bm{R}}_{\text{r},r}^t\right]_{(n_r,n_r)}\! +\! \frac{\sigma_w^2}{P_{r,b^{*}}^{a^{*}}}
  \bm{I}_{N_t}\Big)\left[\bm{\Omega}_{b^a}^a\right]_{(n_r,n_r)}\!\! \bigg) \nonumber \\ & \hspace*{2mm}
  \times \Big(\nu^2
  \left[\bm{M}_{b^{*}}^a\right]_{(n_r^{*},n_r^{*})} + \varsigma^2\left[\bar{\bm{R}}_{\text{r},r}^t
  \right]_{(n_r^{*},n_r^{*})} \Big)\Bigg\} + \sigma_w^2.
 \end{align}
 Furthermore, substituting (\ref{eq29}) into  (\ref{eq15}) leads to the following
 asymptotic SINR expression \setcounter{equation}{34}
\begin{align}\label{eq35}
 \gamma_{b^{*},n_{r}}^{a^{*}} =& \frac{P_{r,b^{*}}^{a^{*}}\left(
  \text{Tr}\left\{\left[\bm{\Theta}_{b^{*}}^{a^{*}}\right]_{(n_r^{*},n_r^{*})}\right\}
  \right)^2}{P_{{\text{I\&N}}_{b^*,n_r}^{a^*}}} .
\end{align}

\begin{remark}\label{RM1}
 Both $\left[\bm{M}_{b^a}^a\right]_{(n_r,n_r)}$ and $\left[\bm{M}_{b^{*}}^a\right]_{(n_r^{*},n_r^{*})}$
 in (\ref{eq34}) are unavailable to aircraft $a^*$, since there is no cooperation between
 the related aircraft. However, both these two terms can be substituted by
 $\left[\bm{M}_{b^{*}}^{a^{*}}\right]_{(n_r^{*}, n_r^{*})}$ as a reasonable approximation.
 The simulation results presented in Section~\ref{S4} will demonstrate that this approximation
 is sufficiently accurate.
\end{remark}

{
\begin{remark}\label{RM2}
 The high velocity of aircraft results in rapidly fluctuating fading. The channel
 estimator (\ref{eq9}) is efficient, since mean square
 error matches the Cram\'{e}r-Rao lower bound. However, 
 by the time the transmitter transmits the precoded signal based on this channel estimate, the
 real channel may have changed. This mobility-induced channel estimation `error' causes a performance degradation
 w.r.t. the optimal performance based on perfect channel estimate. An effective approach to
 mitigate this performance degradation owing to channel estimation errors is to adopt robust transmit 
 precoding. The design of robust precoding is
 beyond the scope of this paper. Some highly effective robust precoding designs can be
 found in \cite{Gong_etal2017,Gong_etal2017a}.
\end{remark}
}

{
\begin{remark}\label{RM3}
As our derivation does not impose any specific geometric structure on the antenna array,
 our results and therefore our proposed physical-layer transmission scheme is applicable to
  systems equipped with uniformly spaced linear arrays (ULAs), uniformly spaced rectangular
 arrays (URAs), or any other generic antenna arrays.
\end{remark}
}

\subsection{Distance-based ACM scheme}\label{S3.2}

 The proposed distance-based ACM scheme is illustrated in Fig.~\ref{FIG2}. Our scheme
 explicitly uses the distance $d_{b^*}^{a^*}$ between aircraft $a^*$ and aircraft $b^{*}$
 as the switching metric to adapt the modulation mode and code rate. Similar to the
 conventional ACM scheme, our distance-based ACM also consists of the set of $K$ ACM modes,
 but its switching thresholds comprise the $K$ distance thresholds $\{d_k\}_{k=1}^K$. An
 example of this distance-based ACM scheme using $K=7$ is given in Table~\ref{Tab1}, where
 the SE is calculated as: $\log_2(\text{modulation order}) \times \text{code rate}\times
 [{(N-N_{\rm cp})}/{N}]$, and the data rate per DRA is calculated as:
 $\text{spectral efficiency}\times B_{\text{total}}$, in which $B_{\text{total}}$ is the
 bandwidth available, while the total data rate is given by: $\text{data rate per DRA}
 \times N_r$. The modulation schemes and code rates are adopted from the design of
 VersaFEC \cite{VersaFEC}, which covers a family of 12 short-block LDPC code rates
 with the matched modulation schemes. VersaFEC is specifically designed for low latency
 and ACM applications. The operations of this distance-based ACM are now given below.

\begin{figure}[tbp!]
\vspace*{-1mm}
\begin{center}
 \includegraphics[width=0.9\columnwidth]{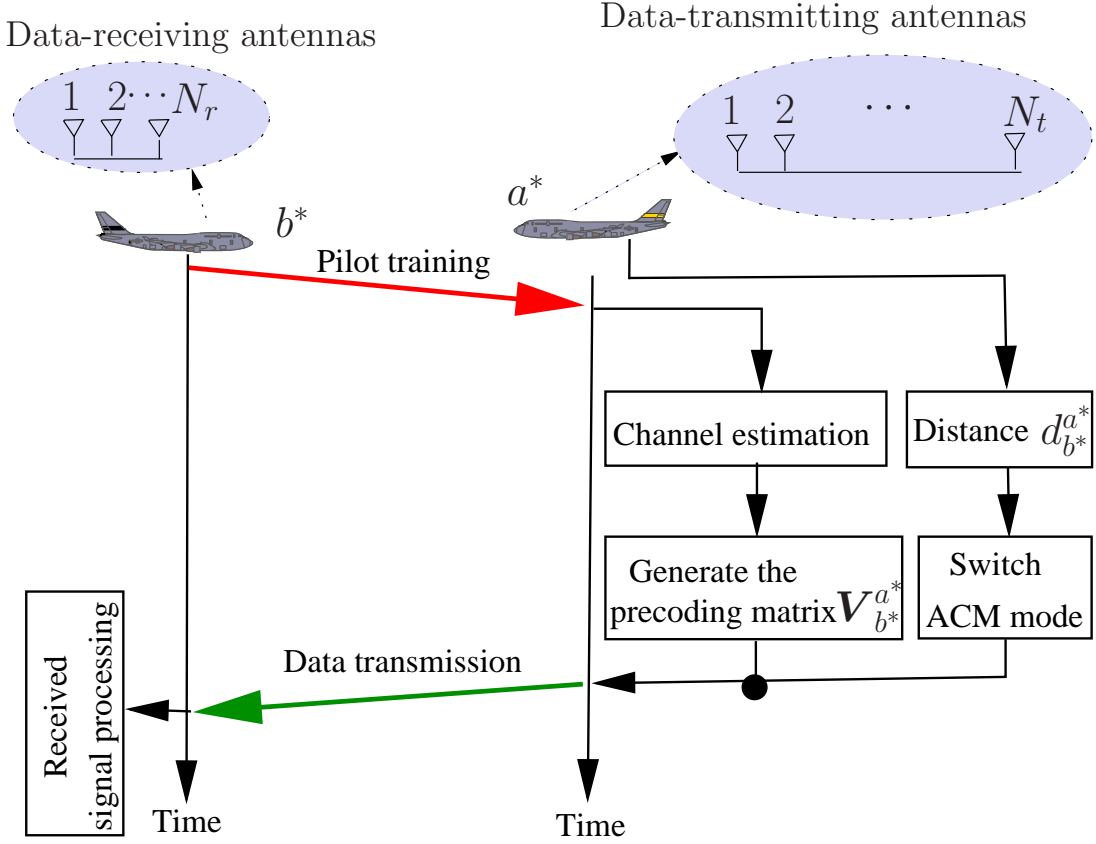}
\end{center}
\vspace*{-5mm}
\caption{Schematic of the proposed distance-based adaptive coding and modulation scheme.}
\label{FIG2}
\vspace*{-4mm}
\end{figure}

\vspace*{1mm}

\noindent
 1)~Aircraft $a^{*}$ estimates the channel $\bm{H}_{b^{*}}^{a^{*}}$ based on the pilots
 transmitted by aircraft $b^{*}$, as detailed in Section~\ref{S2.1}.

\vspace*{1mm}

\noindent
 2)~Aircraft $a^{*}$ calculates the TPC matrix $\bm{V}_{b^{*}}^{a^{*}}$ according to
 (\ref{eq11}).

\vspace*{1mm}

\noindent
 3)~Aircraft $a^{*}$ selects an ACM mode to transmit the data according to
\begin{align}\label{eq36}
 & \text{ If } d_{k} \le d_{b^{*}}^{a^{*}} < d_{k-1}: \text{ choose mode } k , 
\end{align}
 where $k\in \{1,2,\cdots , K\}$, and we assume $d_0=D_{\max}$. When
 $d_{b^{*}}^{a^{*}}\ge D_{\max}$, there exists no available communication link,
 since the two aircraft are beyond each others' communication range. Since the
 minimum flight-safety separation must be obeyed, we do not consider the senario
 of $d_{b^*}^{a^*} \le D_{\min}$.

\begin{table*}[bp!]
\footnotesize
\vspace*{-3mm}
\caption{An example of adaptive coding and modulation scheme with $N_t=32$ and $N_r=4$. The
 other system parameters for this ACM are listed in Table~\ref{Tab3}.}
\vspace*{-4mm}
\begin{center}
\begin{tabular}{|C{1.2cm}|C{1.4cm}|C{1.4cm}|C{2.4cm}|C{2.4cm}|C{2.4cm}|C{2.3cm}|}
\hline
 Mode $k$ & Modulation & Code rate & Spectral efficiency (bps/Hz) & Switching threshold $d_k$ (km)
 & Data rate per receive antenna (Mbps) & Total data rate (Mbps) \\ \hline
 1 & BPSK   & 0.488 & 0.459 & 500 & 2.754  & 11.016 \\ \hline
 2 & QPSK   & 0.533 & 1.000 & 350 & 6.000  & 24.000 \\ \hline
 3 & QPSK  & 0.706 & 1.322 & 200 & 7.932 & 31.728 \\ \hline
 4 & 8-QAM  & 0.642 & 1.809 & 110  & 10.854 & 43.416 \\ \hline
 5 & 8-QAM & 0.780 & 2.194 & 40  & 13.164 & 52.656 \\ \hline
 6 & 16-QAM & 0.731 & 2.747 & 25  & 16.482 & 65.928 \\ \hline
 7 & 16-QAM & 0.853 & 3.197 & 5.56  & 19.182 & 76.728 \\ \hline
\end{tabular}
\end{center}
\label{Tab1}
\vspace*{-2mm}
\end{table*}

\begin{table*}[bp!]
{
\footnotesize
\vspace*{-2mm}
\caption{{ An example of adaptive coding and modulation scheme with $N_t=64$ and $N_r=4$. The
 other system parameters for this ACM are listed in Table~\ref{Tab3}.}}
\vspace*{-4mm}
\begin{center}
\begin{tabular}{|C{1.2cm}|C{1.4cm}|C{1.4cm}|C{2.4cm}|C{2.4cm}|C{2.4cm}|C{2.3cm}|}
\hline
 Mode $k$ & Modulation & Code rate & Spectral efficiency (bps/Hz) & Switching threshold $d_k$ (km)
 & Data rate per receive antenna (Mbps) & Total data rate (Mbps) \\ \hline
 1 & QPSK  & 0.706 & 1.322 & 400 & 7.932 & 31.728 \\ \hline
 2 & 8-QAM  & 0.642 & 1.809 & 250  & 10.854 & 43.416 \\ \hline
 3 & 8-QAM & 0.780 & 2.194 & 120  & 13.164 & 52.656 \\ \hline
 4 & 16-QAM & 0.731 & 2.747 & 50  & 16.482 & 65.928 \\ \hline
 5 & 16-QAM & 0.853 & 3.197 & 5.56  & 19.182 & 76.728 \\ \hline
\end{tabular}
\end{center}
\label{Tab2}
}
\vspace*{-2mm}
\end{table*}

 The switching distance threshold for each ACM mode is determined based on the
 achievable rate per DRA as a function of distance. Specifically, the theoretically
 achievable rate per DRA as a function of distance is calculated using (\ref{eq15}).
 The distance thresholds $\{d_k\}_{k=1}^K$ are chosen so that the SE of mode $k$
 is lower than the theoretically achievable rate per DRA in the distance range of
 $[d_k , ~ d_{k-1}]$ to ensure successful transmission. Fig.~\ref{FIG3a} illustrates
 how the 7 distance thresholds are designed for the example provided in Table~\ref{Tab1}.
 For this example, $N_t=32$, $N_r=4$, and the total system bandwidth is
 $B_{\rm total}=6$\,MHz which is reused by every aircraft in the system. The theoretically
 achievable rate per DRA as a function of distance is depicted as the dot-marked solid
 curve in Fig.~\ref{FIG3a}. By designing the 7 distance thresholds $d_k$ for
 $1\le k\le 7$ to ensure that the SE of mode $k$ is lower than the theoretically
 achievable rate in the distance range $[d_k , ~ d_{k-1}]$, we obtain the 7 desired
 distance thresholds for this ACM example, which are indicated in Fig.~\ref{FIG3a} as
 well as listed in Table~\ref{Tab1}. The designed 7 ACM modes are: \emph{mode 1}
 having the SE of 0.459 bps/Hz, \emph{mode 2} with the SE 1.000 bps/Hz, \emph{mode 3}
 with the SE 1.322 bps/Hz, \emph{mode 4} with the SE 1.809 bps/Hz, \emph{mode 5} with
 the  SE 2.194 bps/Hz, \emph{mode 6} with the 2.747 bps/Hz, and \emph{mode 7} with the
 SE 3.197 bps/Hz.

\begin{figure*}[htp!]
\vspace*{-2mm}
\begin{center}
\subfigure[$N_t = 32, N_r = 4$]{\includegraphics[width=0.47\textwidth]{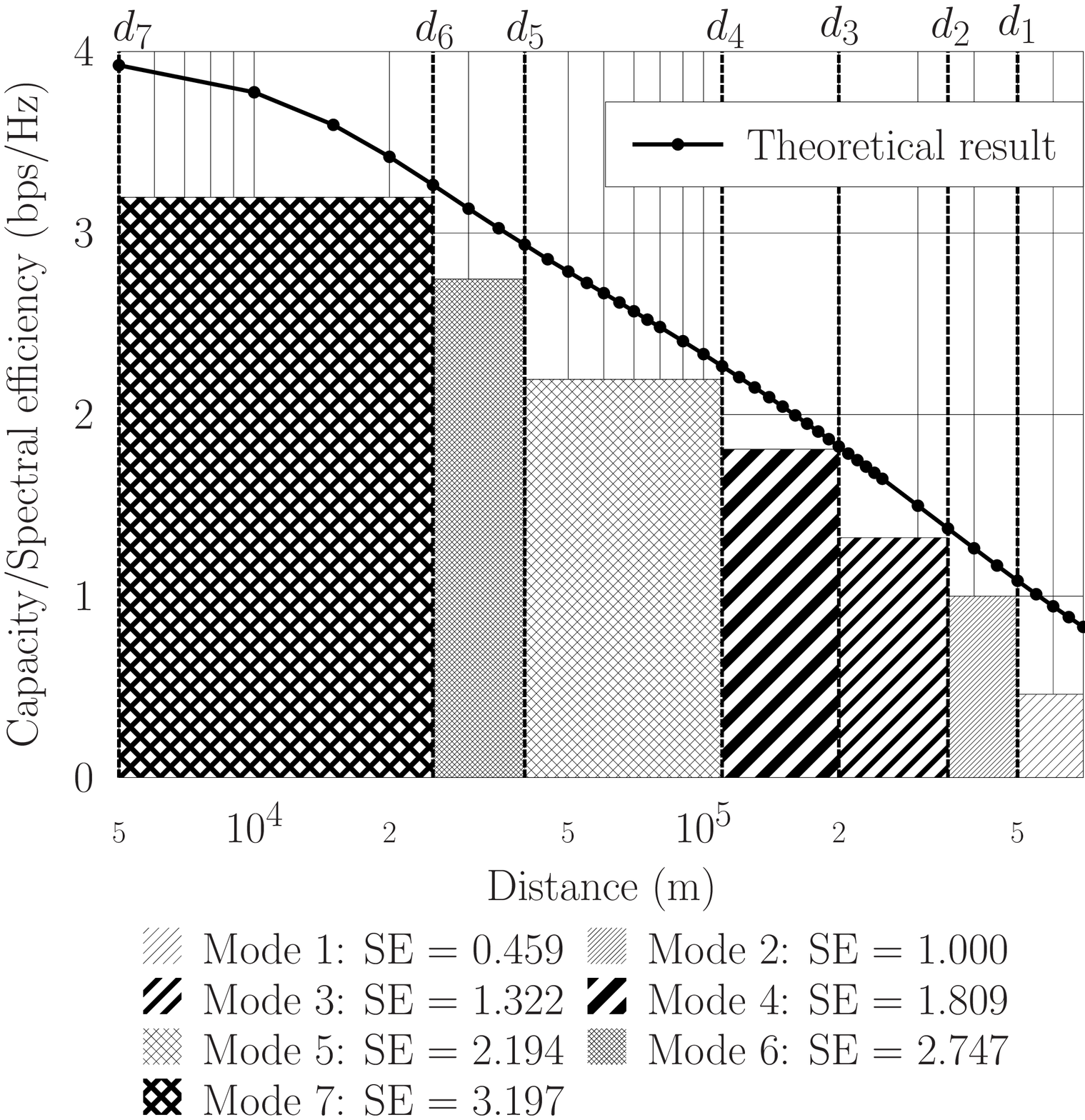} 
\label{FIG3a}}
\subfigure[{ $N_t = 64, N_r = 4$}]{\includegraphics[width=0.47\textwidth]{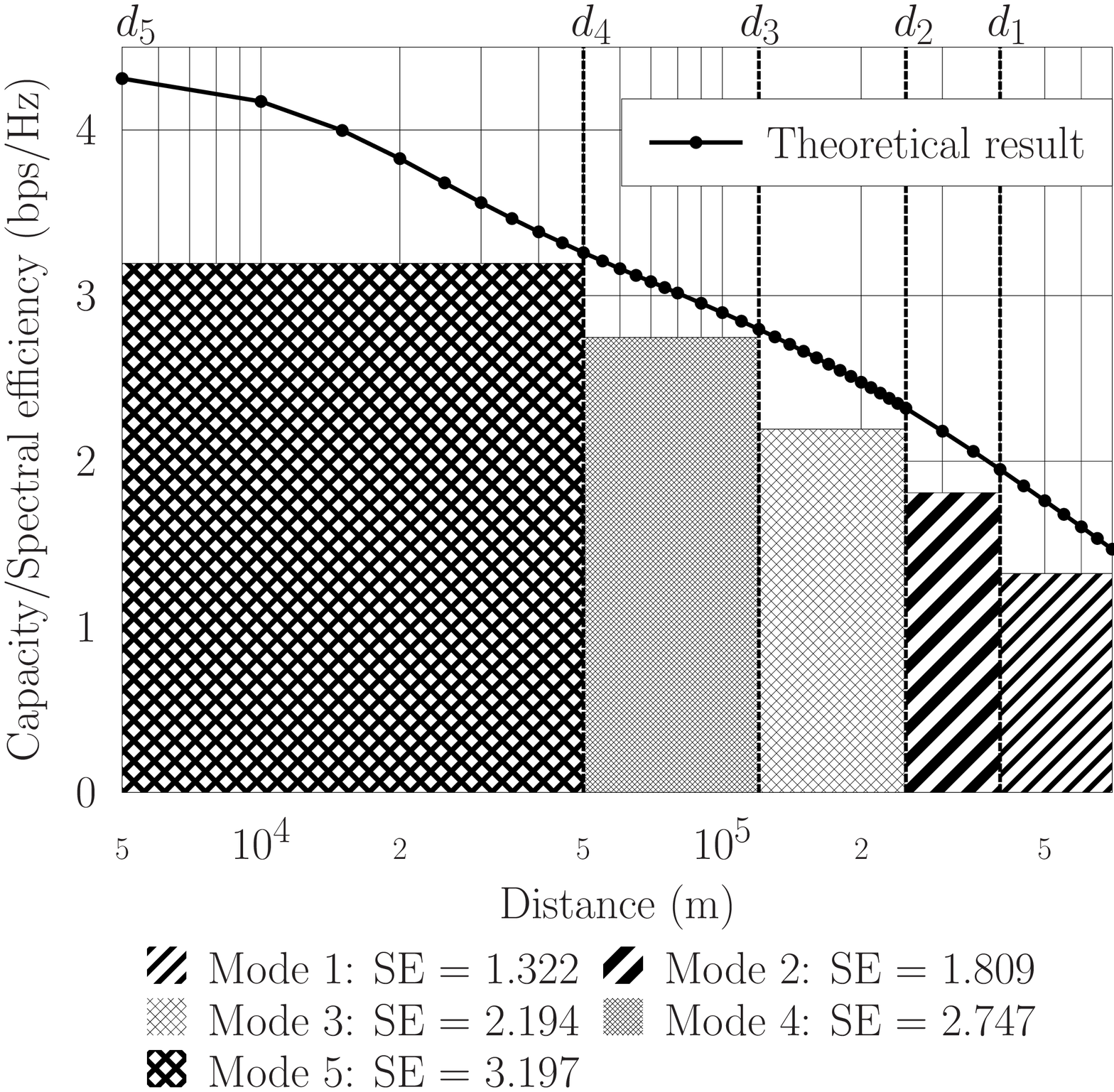}
\label{FIG3b}}
\end{center}
\vspace{-4mm}
\caption{Illustration of how to obtain the desired distance thresholds for the
 proposed distance-based ACM scheme, using { the examples of Tables~\ref{Tab1} and
\ref{Tab2}.}}
\label{FIG3}
\vspace{-4mm}
\end{figure*}

{
 We also provide another design example of the ACM scheme for $K=5$ modes. For this
 example, we have $N_t=64$ and $N_r=4$, while the other system parameters are the same as the
 ACM scheme listed in Table~\ref{Tab1}. As shown in Fig.~\ref{FIG3b} and Table~\ref{Tab2},
 the five ACM modes have the SEs of 1.322\,bps/Hz, 1.809\,bps/Hz, 2.194\,bps/Hz,
 2.747\,bps/Hz and 3.197\,bps/Hz, respectively, while the corresponding switching
 thresholds are 400\,km, 250\,km, 120\,km, 50\,km, and 5.56\,km, respectively. By comparing
 Table~\ref{Tab2} to Table~\ref{Tab1}, it can be seen that employing a larger number
 of transmit antennas enables an ACM mode to operate over a larger range of distances, or
 it allows the system to transmit at a higher SE over a given communication distance. This
 makes sense because a well-known MIMO property is that employing more transmit antennas
 can mitigate the interference more effectively.
}

\begin{remark}\label{RM4}
 It is worth recapping that the conventional instantaneous SNR-based ACM is unsuitable
 for aeronautical communication applications, because the speed of aircraft is ultra high,
 which results in rapidly changing of large-scale fading and consequently very unreliable
 estimate of the instantaneous SNR as well as leads to frequently switching among the
 modes. Using erroneous instantaneous SNR estimates to frequently switch modes will
 cause frequent unsuccessful transmissions. By contrast, the proposed ACM scheme switches
 its mode based on the distance, which is readily available to the transmitting aircraft,
 since every jumbo jet has a radar and is equipped with GPS. It can be seen that this
 distance-based ACM scheme is particularly suitable for aeronautical communication
 applications. { Furthermore, it is worth emphasizing that using the distance as the
 switching metric is theoretically well justified, because for the aeronautical communication
 channel the achievable capacity is mainly dependent on the distance, as we have
 analytically derived in Subsection~\ref{S3.1}.} Note that owing to the high velocity of
 the aircraft, no physical layer transmission scheme can guarantee successful transmission
 for every transmission slot. Other higher-layer measures, such as Automatic-Repeat-reQuest
 (ARQ) \cite{chen2014asurvey,ngo2014hybrid}, can be employed for enhancing reliable
 communication among aircraft. Discussing these higher-layer techniques is beyond the scope
 of this paper.
\end{remark}

\begin{table}[tp!]
\caption{Default parameters used in the simulated aeronautical communication system.}
\vspace*{-5mm}
\begin{center}
\begin{tabular}{l|l}
\hline\hline
\multicolumn{2}{l}{System parameters for ACM} \\ \hline
 Number of interference aircraft $A$        & 4 \\ \hline
 Number of DRAs $N_r$    & 4 \\ \hline
 Number of DTAs $N_t$ & 32 \\ \hline
 Transmit power per antennas $P_t$          & 1 watt \\ \hline
 Number of total subcarriers $N$            & 512 \\ \hline
 Number of CPs $N_{\rm cp}$       & 32 \\ \hline
 Rician factor $K_{\text{Rice}}$            & 5  \\ \hline
 Bandwidth $B_{\text{total}}$               & 6 MHz  \\ \hline
 Frequency of centre subcarrier             & 5 GHz \\ \hline\hline
\multicolumn{2}{l}{Other system parameters} \\ \hline
 Correlation factor between antennas $\rho$ & 0.1\\ \hline
 Noise figure at receiver $F$               & 4 dB \\ \hline
 Distance between communicating aircraft $a^{*}$ and $b^{*}$ $d_{b^{*}}^{a^{*}}$  & 10 km \\ \hline
 Maximum communication distance $D_{\max}$  & 740 km \\ \hline\hline
\end{tabular}
\end{center}
\label{Tab3}
\vspace*{-5mm}
\end{table}

\section{Simulation study}\label{S4}

 The default values of the parameters used for our simulated aeronautical communication
 system are summarized in Table~\ref{Tab3}. Unless otherwise specified, these default
 values are used. The number of DRAs is much lower than the number of DTAs in order to
 ensure that the DRAs' signals remain uncorrelated to avoid the interference
 among the antennas at the receiver. The deterministic part of the Rician channel, which
 satisfies $\text{Tr}\left\{\bm{H}_{{\rm d},b}^a\bm{H}_{{\rm d},b}^{a, \rm H}\right\}
 =N_t N_r$, is generated in every Monte-Carlo simulation. The scattering component of
 the Rician channel $\bm{H}_{\text{r}, b}^{a}\in \mathbb{C}^{N_{r} \times N_{t}}$ is
 generated according to
\begin{align}\label{eq37}
 \bm{H}_{\text{r}, b}^{a} =& \bm{R}_{b}\bm{G}\bm{R}^{a} ,
\end{align}
 where $\bm{R}_b=\bm{I}_{N_r}$, since the DRAs are uncorrelated, but the
 $m$th-row and $n$th-column element of the correlation matrix of $\bm{R}^a$, denoted by
 $ [\bm{R}^{a}]_{[m,n]}$, is generated according to{ \cite{martin2004asymptotic,lee2015antenna}}
\begin{align}\label{eq38}
 [\bm{R}^{a}]_{[m,n]} =& \big([\bm{R}^{a}]_{[n,m[}\big)^{\ddagger} = \left(t \rho\right)^{|m - n|} ,
\end{align}
 in which $( ~ )^{\ddagger}$ denotes conjugate operation, and $t\sim {\cal CN}(0,1)$ 
{ is the magnitude of the correlation coefficient, which is determined by the antenna element
 spacing \cite{lee2015antenna}. The antenna array correlation matrix model (\ref{eq38})
 is derived based on the ULA, and this implies that we adopt the ULA in our simulation
 study. However, it is worth recalling Remark~\ref{RM3} stating that our scheme is not restricted
 to the ULA.} In the investigation of the achievable throughput, `Theoretical' is the
 throughput calculated using (\ref{eq17}) relying on the perfect knowledge of
 $\left[\bm{M}_{b^{a}}^{a}\right]_{(n_r, n_r)}$ and $\left[\bm{M}_{b^{*}}^{a}\right]_{(n_r^{*}, n_r^{*})}$
 in (\ref{eq34}), and `Approximate' is the throughput calculated using (\ref{eq15})
 with both $\left[\bm{M}_{b^{a}}^{a}\right]_{(n_r, n_r)}$ and $\left[\bm{M}_{b^{*}}^{a}\right]_{(n_r^{*}, n_r^{*})}$
 substituted by $\left[\bm{M}_{b^{*}}^{a^{*}}\right]_{(n_r^{*}, n_r^{*})}$ in (\ref{eq34}),
 while `Simulation' is the Monte-Carlo simulation result.

\begin{figure*}[htbp!]
\vspace*{-4mm}
\begin{center}
\subfigure[Throughput]{
  \includegraphics[width=0.47\textwidth]{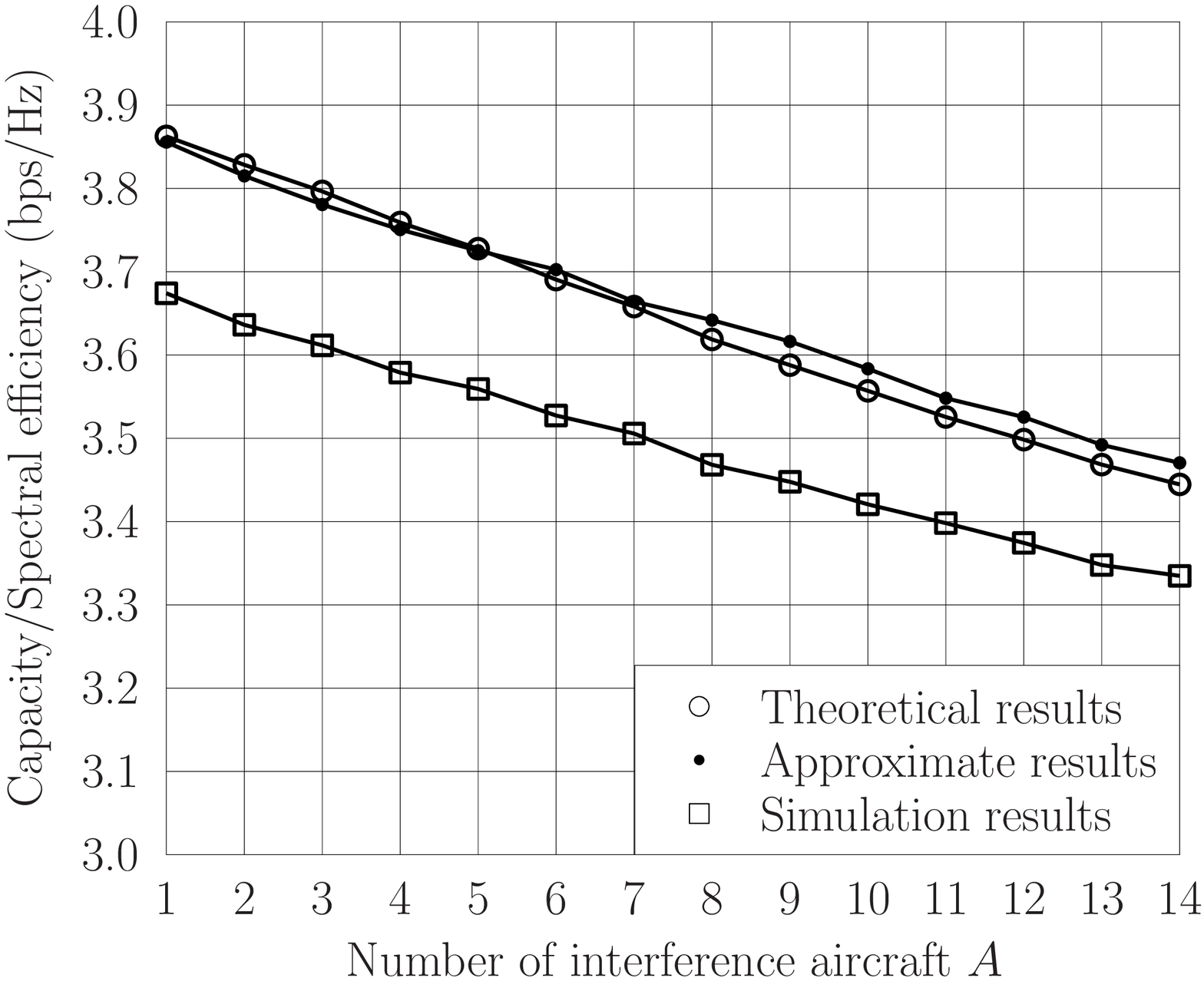} 
  \label{FIG4a}
	}
\subfigure[CCDF]{\includegraphics[width=0.47\textwidth]{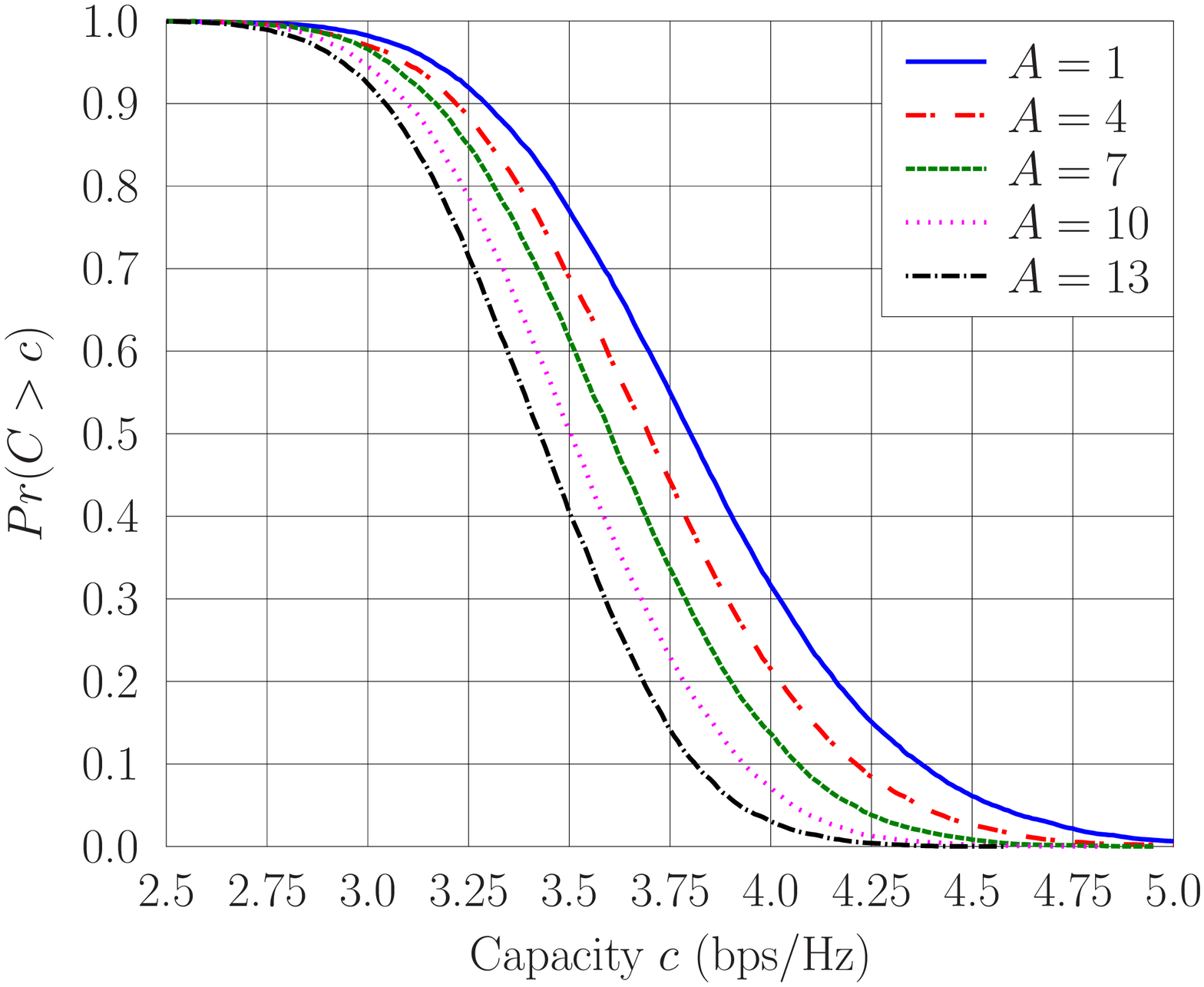}
  \label{FIG4b}}
\end{center}
\vspace{-4mm}
\caption{(a)~The achievable throughput per DRA as a function of the number of
 interfering aircraft $A$, and (b)~The CCDFs of the simulated throughputs per DRA
 for different numbers of interfering aircraft. The distances between the interfering
 aircraft and the desired receiving aircraft are uniformly distributed within the range
 of $\big[d_{b^{*}}^{a^{*}}, ~ D_{\text{max}}\big]$. The rest of the parameters are
 specified in Table~\ref{Tab3}.}
\label{Fig4}
\vspace*{-2mm}
\end{figure*}

\subsection{Results}\label{S4.1}

 Fig.~\ref{FIG4a} investigates the achievable throughput per DRA as a function of the
 number of interfering aircraft $A$, where the `Approximate results' match closely the
 `Theoretical results', confirming that $\left[\bm{M}_{b^{*}}^{a^{*}}\right]_{(n_r^{*},n_r^{*})}$
 is an accurate approximation of both the unknown $\left[\bm{M}_{b^{a}}^{a}\right]_{(n_r,n_r)}$
 and $\left[\bm{M}_{b^{*}}^{a}\right]_{(n_r^{*},n_r^{*})}$. As expected, the achievable
 throughput degrades as the number of interfering aircraft increases. Also the theoretical
 throughput is about 0.2\,bps/Hz higher than the simulated throughput. This is because
 the theoretical throughput is obtained by { using the asymptotic interference plus noise
 power} as $N_t\to \infty$ and, therefore, it represents the asymptotic upper bound of the
 achievable throughput. The complementary cumulative distribution functions (CCDFs) of
 the simulated throughput recorded for different numbers of interfering aircraft $A$ are
 shown in Fig.~\ref{FIG4b}, which characterizes the probability of the achievable throughput
 above a given value.

\begin{figure*}[tbp!]
\vspace*{-4mm}
\begin{center}
\subfigure[Throughput]{
  \includegraphics[width=0.47\textwidth]{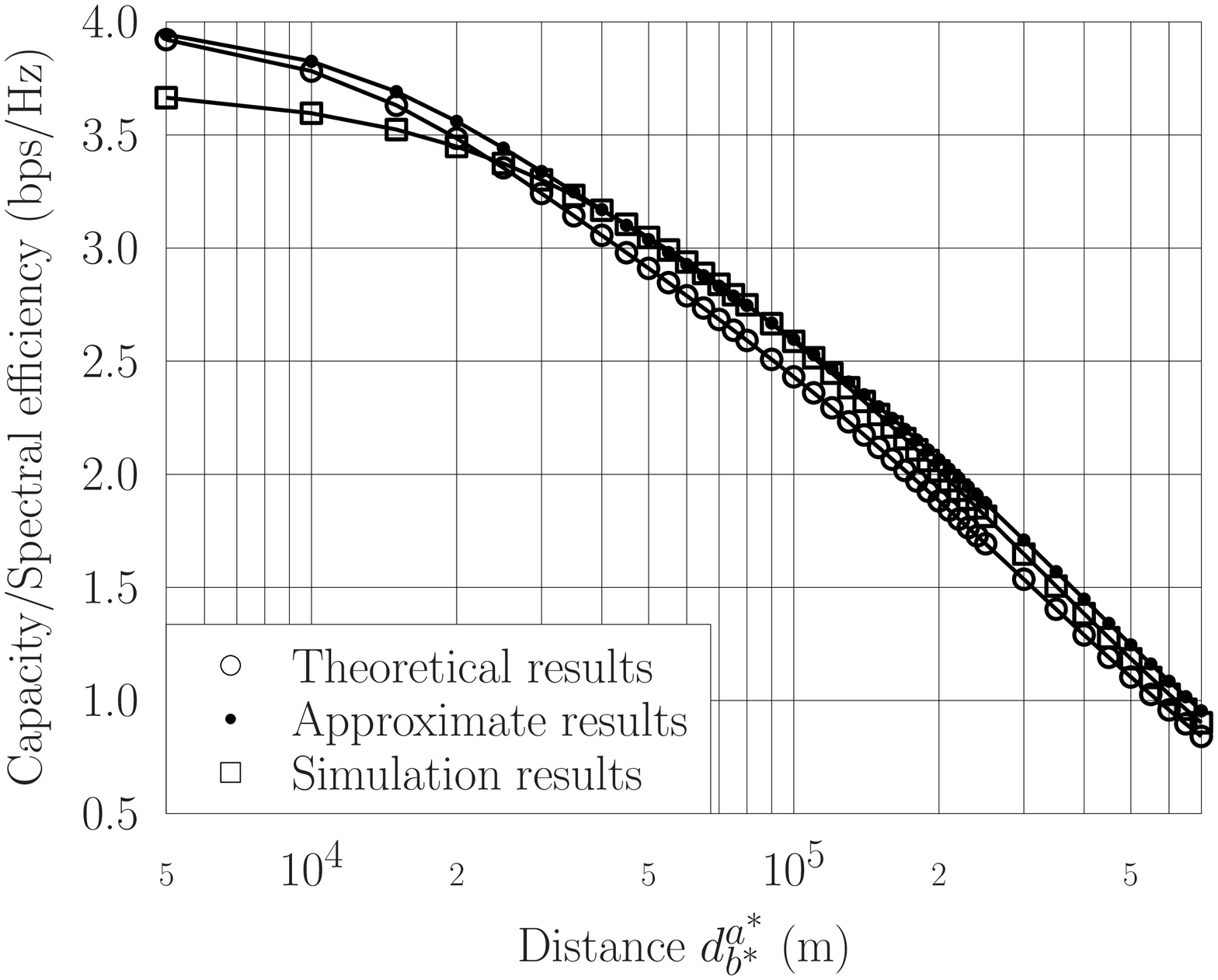} 
  \label{FIG5a}
	}
\subfigure[CCDF]{\includegraphics[width=0.47\textwidth]{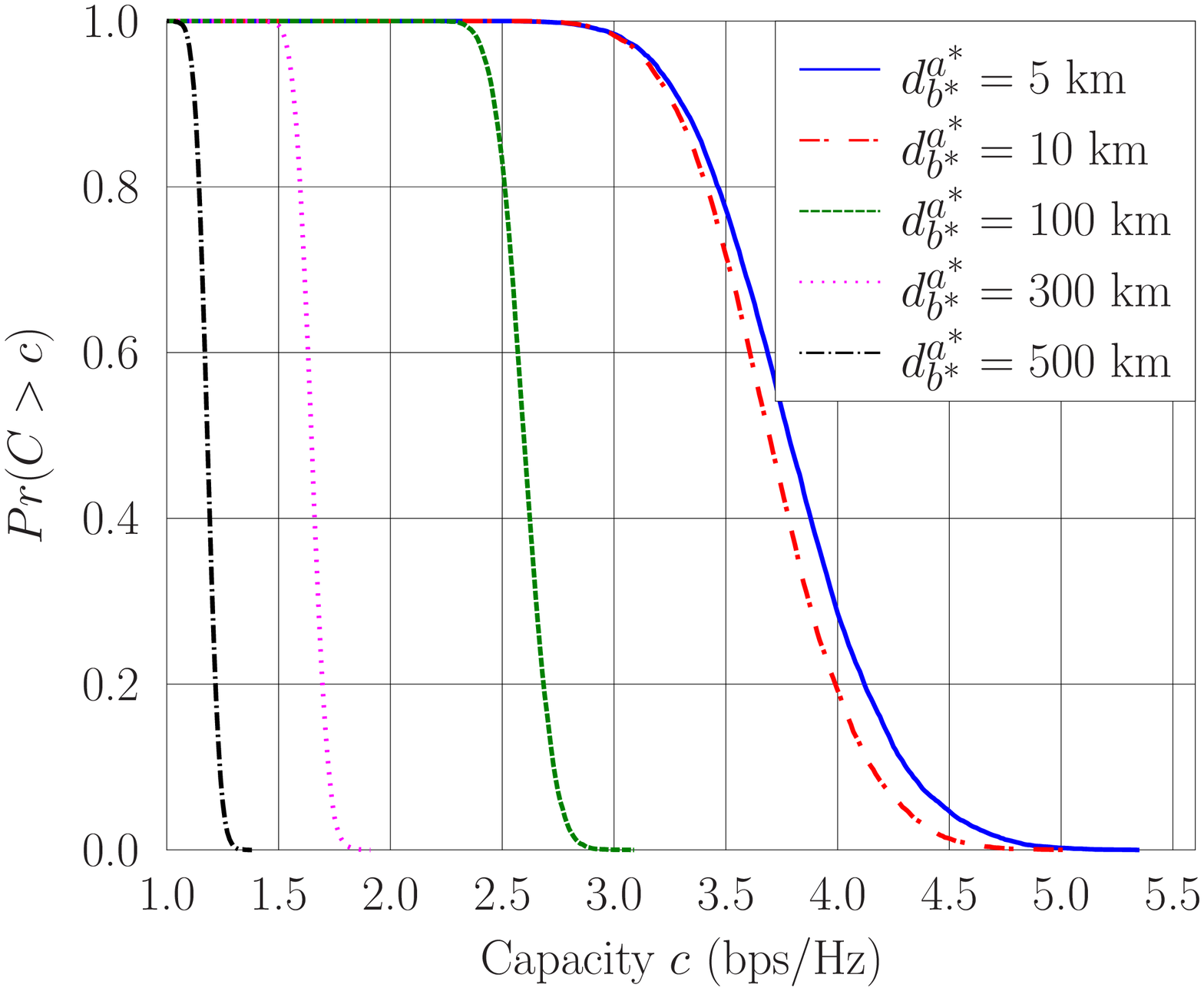}
  \label{FIG5b}}
\end{center}
\vspace{-4mm}
\caption{(a)~The achievable throughput per DRA as a function of the distance
 $d_{b^{*}}^{a^{*}}$ between the desired communicating aircraft $a^{*}$ and $b^{*}$,
 and (b)~The CCDFs of the simulated throughputs per DRA for different
 $d_{b^{*}}^{a^{*}}$. The distances between the interfering aircraft and the
 desired receiving aircraft are uniformly distributed within the range of
 $\big[d_{b^{*}}^{a^{*}},~ D_{\text{max}}\big]$. The rest of the parameters are
 specified in Table~\ref{Tab3}.}
\label{FIG5}
\end{figure*}

 Fig.~\ref{FIG5a} portrays the achievable throughput per DRA as a  function of the distance
 $d_{b^{*}}^{a^{*}}$ between the desired pair of communicating aircraft, while the CCDFs
 of the simulated throughput recorded for different values of $d_{b^{*}}^{a^{*}}$ are
 depicted in Fig.~\ref{FIG5b}. As expected, the achievable throughput degrades upon increasing
 the communication distance. { Observe that the performance gap between the theoretical curve
 and the simulation curve at the point of $d_{b^{*}}^{a^{*}}=10$\,km is also around 0.2\,bps/Hz,
 which agrees with the results of Fig.~\ref{FIG4a}.}

 The impact of the number of DTAs on the achievable throughput is  investigated in
 Fig.~\ref{FIG6}. Specifically, Fig.~\ref{FIG6a} shows the throughput per DRA as a 
 function of the number of DTAs $N_t$, while Fig.~\ref{FIG6b} depicts the CCDFs of the
 simulated throughputs per DRA for different numbers of DTAs $N_t$. Observe that the
 achievable throughput increases as $N_t$ increases. Moreover, when the number of DTAs
 increase to $N_t\ge 120$, the achievable throughput saturates, as seen from
 Fig.~\ref{FIG6a}. Similarly, the CCDFs recorded for $N_t=140$ and $N_t=180$ are
 indistinguishable, as clearly seen from Fig.~\ref{FIG6b}. The  implication is that the
 asymptotic performance is reached for $N_t \ge 120$.

\begin{figure*}[htbp!]
\vspace{-2mm}
\begin{center}
\subfigure[Throughput]{
  \includegraphics[width=0.47\textwidth]{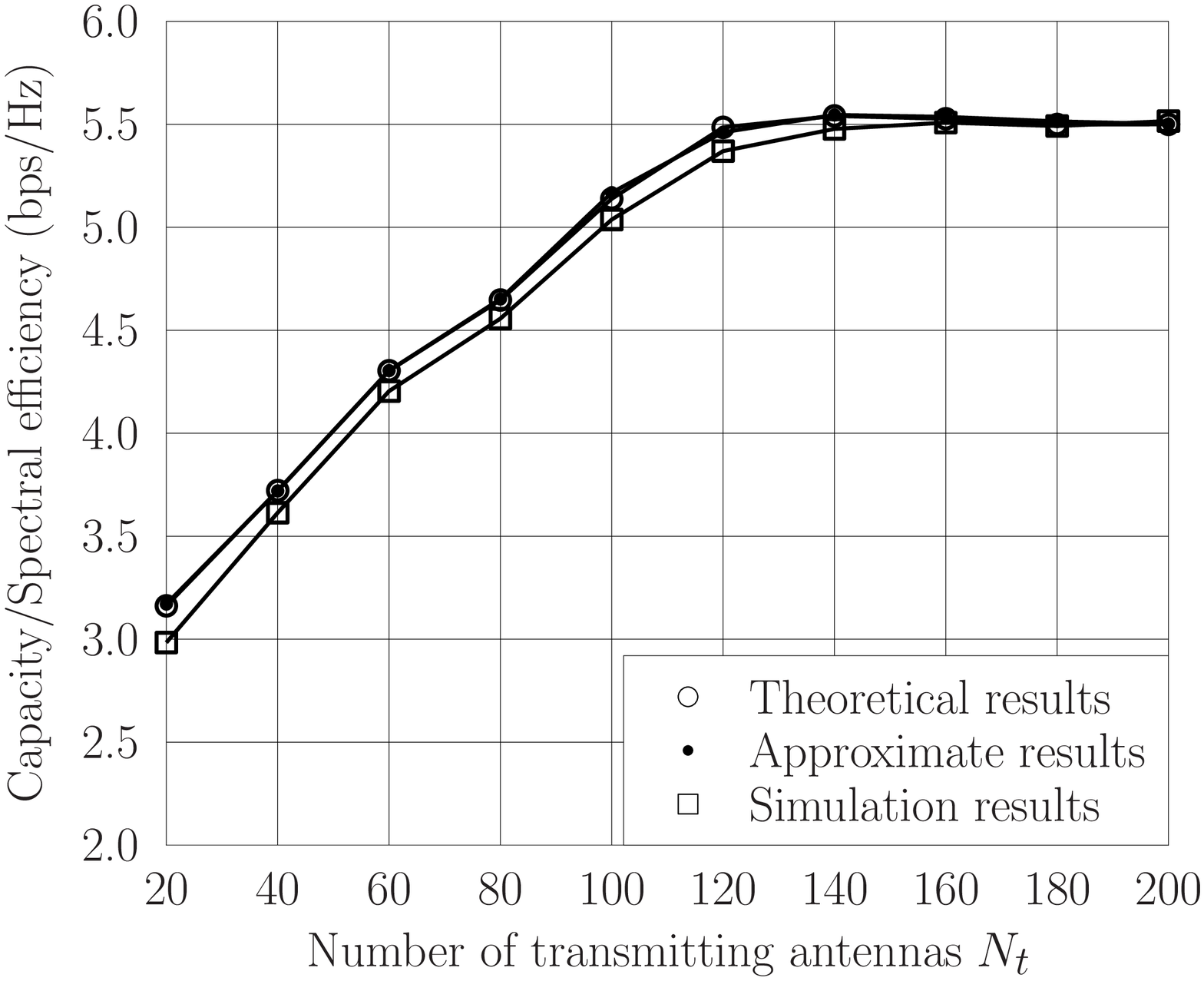} 
  \label{FIG6a}
	}
\subfigure[CCDF]{\includegraphics[width=0.47\textwidth]{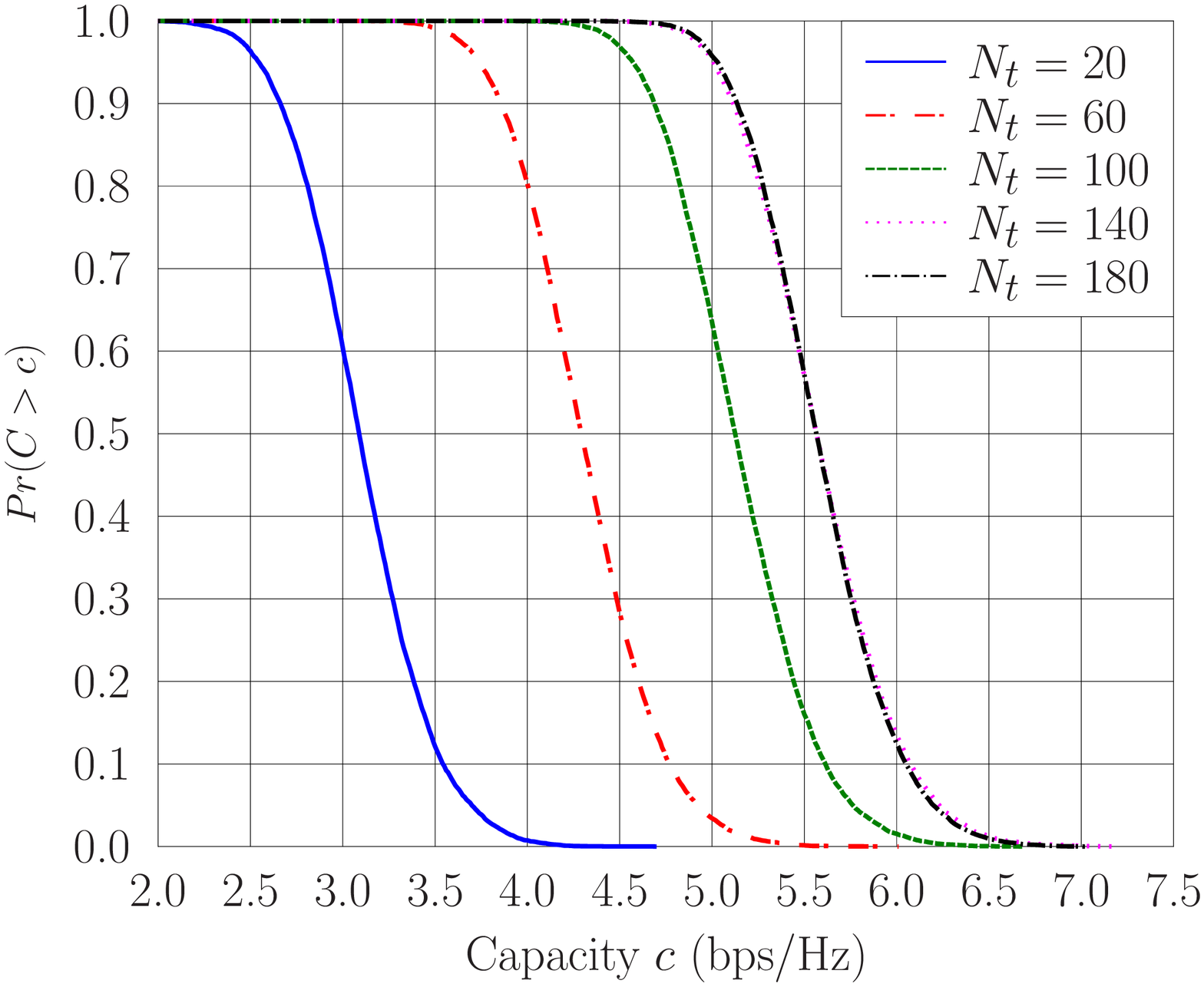}
  \label{FIG6b}}
\end{center}
\vspace{-4mm}
\caption{(a)~The achievable throughput performance per DRA as a function of the
 number of DTAs $N_t$, and (b)~The CCDFs of the simulated throughputs per DRA for
 different numbers of DTAs $N_t$. The distances between the interfering aircraft
 and the desired receiving aircraft are uniformly distributed within the range of
 $\big[d_{b^{*}}^{a^{*}},~ D_{\text{max}}\big]$. The rest of the parameters are
 specified in Table~\ref{Tab3}.}
\label{FIG6}
\vspace{-2mm}
\end{figure*}

\begin{figure*}[htbp!]
\vspace{-4mm}
\begin{center}
\subfigure[Throughput]{
  \includegraphics[width=0.47\textwidth]{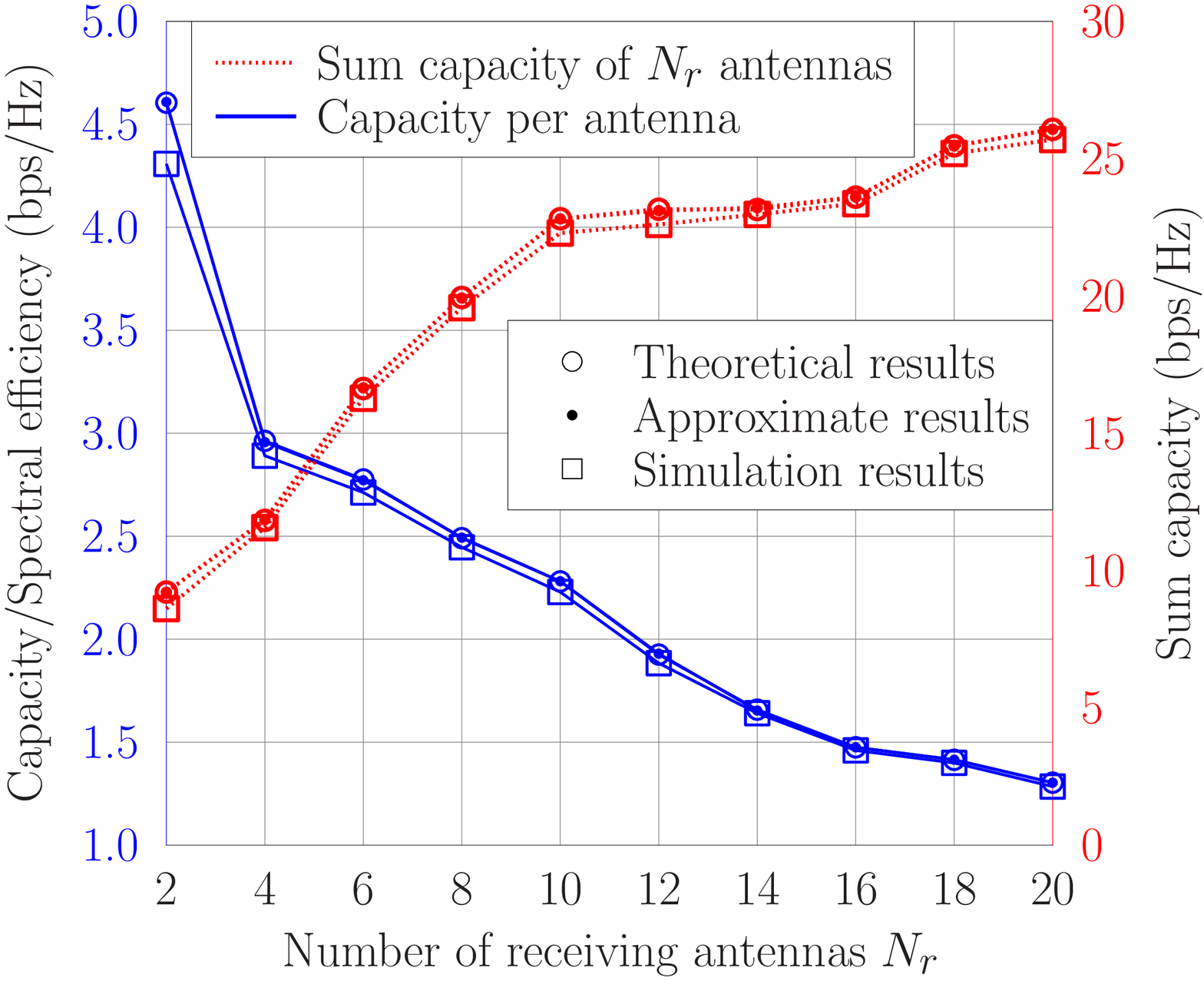} 
  \label{FIG7a}
	}
\subfigure[CCDF]{\includegraphics[width=0.47\textwidth]{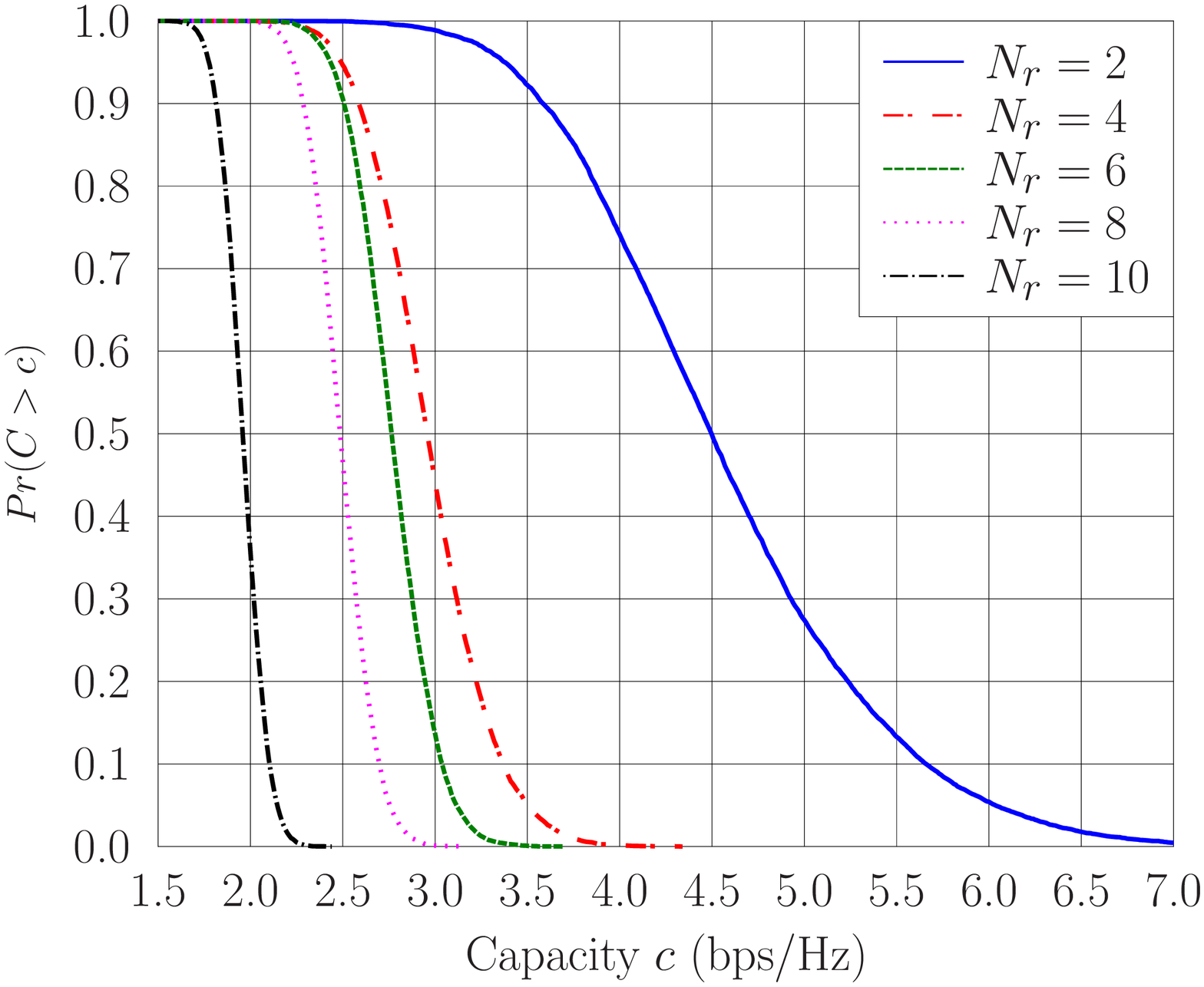}
  \label{FIG7b}}
\end{center}
\vspace{-4mm}
\caption{(a)~The achievable throughput as a function of the number of DRAs $N_r$, and
 (b)~The CCDFs of the simulated throughputs per DRA for different numbers of DRAs $N_r$.
 The distances between the interfering aircraft and the desired receiving aircraft are
 uniformly distributed within the range of $\big[d_{b^{*}}^{a^{*}}, ~ D_{\text{max}}\big]$.
 The rest of the parameters are specified in Table~\ref{Tab3}.}
\label{FIG7}
\end{figure*}

\begin{figure*}[htbp!]
\vspace{-2mm}
\begin{center}
\subfigure[Throughput]{
  \includegraphics[width=0.47\textwidth]{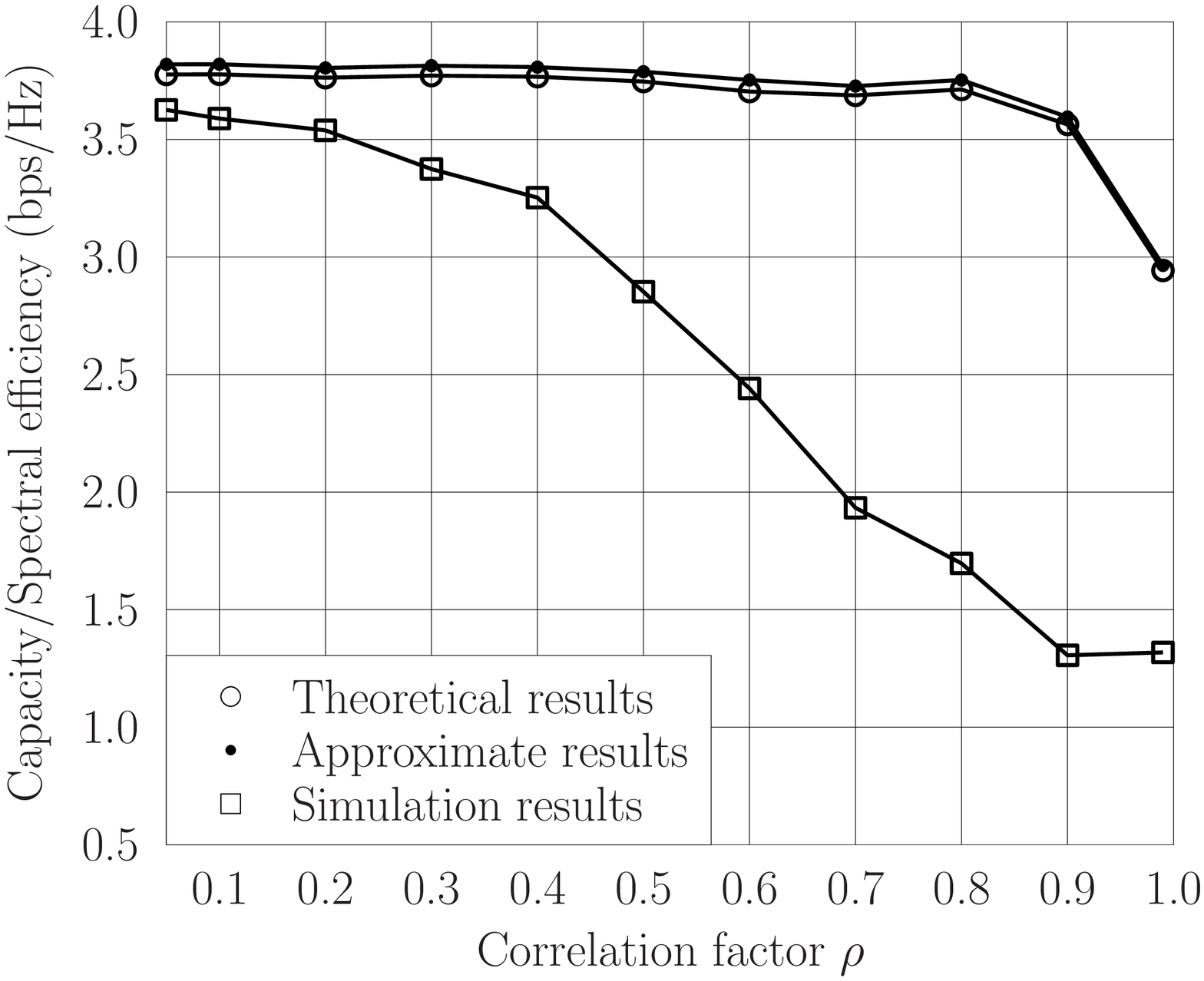} 
  \label{FIG8a}
	}
\subfigure[CCDF]{\includegraphics[width=0.47\textwidth]{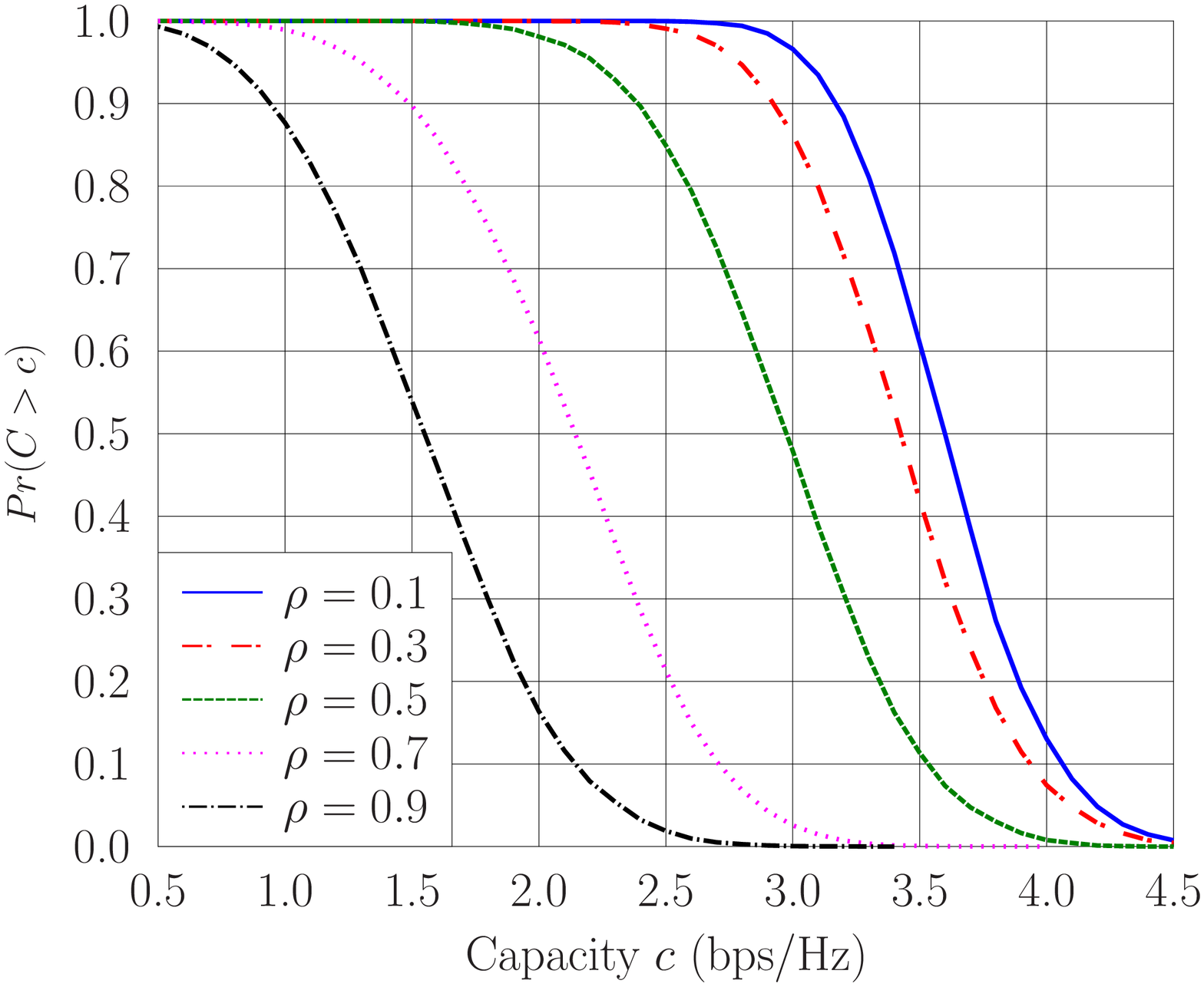}
  \label{FIG8b}}
\end{center}
\vspace{-4mm}
\caption{(a)~The achievable throughput per DRA as a function of the correlation factor of
 DTAs $\rho$, and (b)~The CCDFs of the simulated throughputs per DRA for different values
 of $\rho$. The distances between the interfering aircraft and the desired receiving
 aircraft are uniformly distributed within the range of $\big[d_{b^{*}}^{a^{*}}, ~
 D_{\text{max}}\big]$. The rest of the parameters are specified in Table~\ref{Tab3}.}
\label{FIG8}
\vspace{-1mm}
\end{figure*}

\begin{figure*}[tp!]
\vspace*{-1mm}
\begin{center}
\subfigure[Throughput]{
  \includegraphics[width=0.47\textwidth]{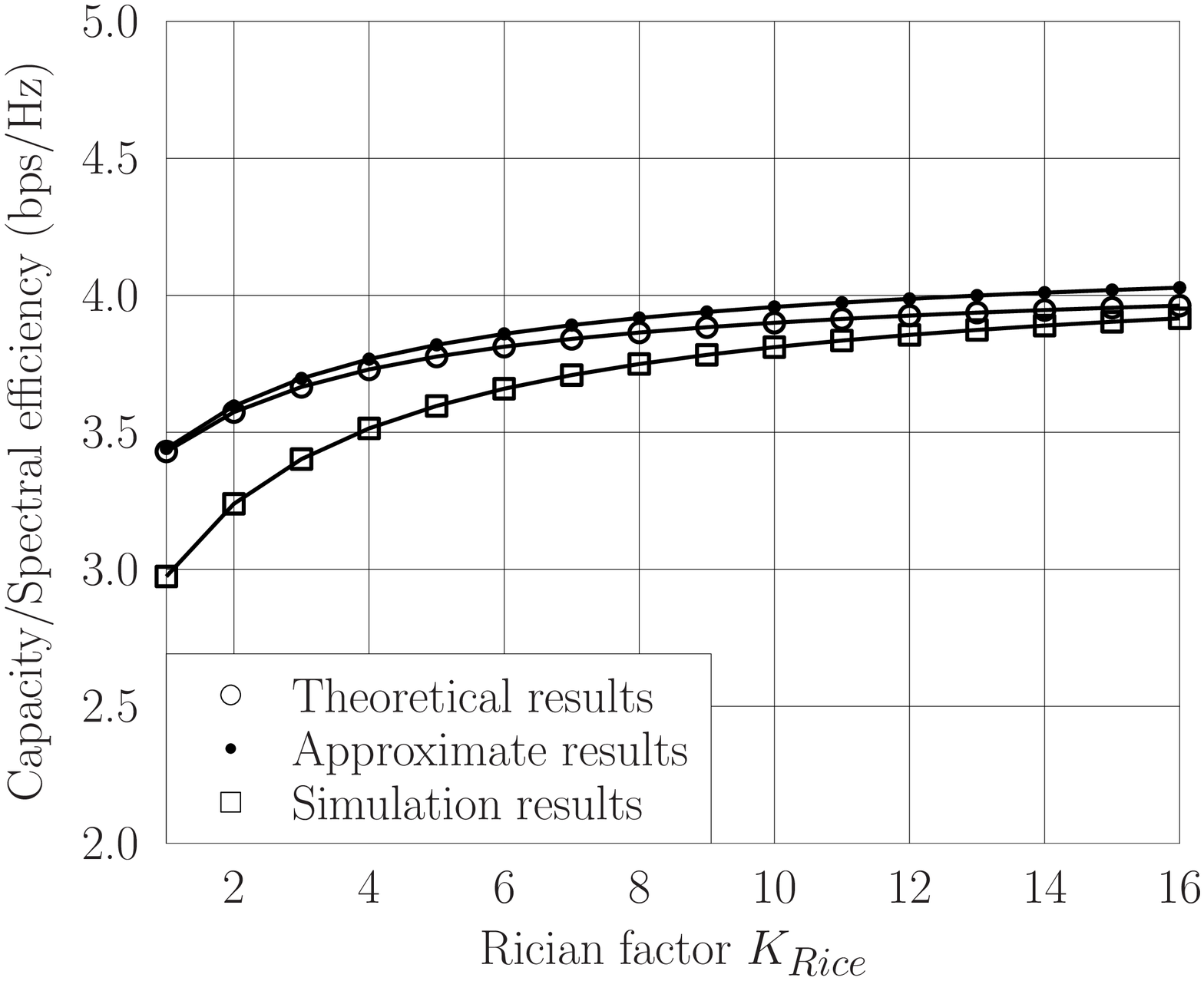} 
  \label{FIG9a}
	}
\subfigure[CCDF]{\includegraphics[width=0.47\textwidth]{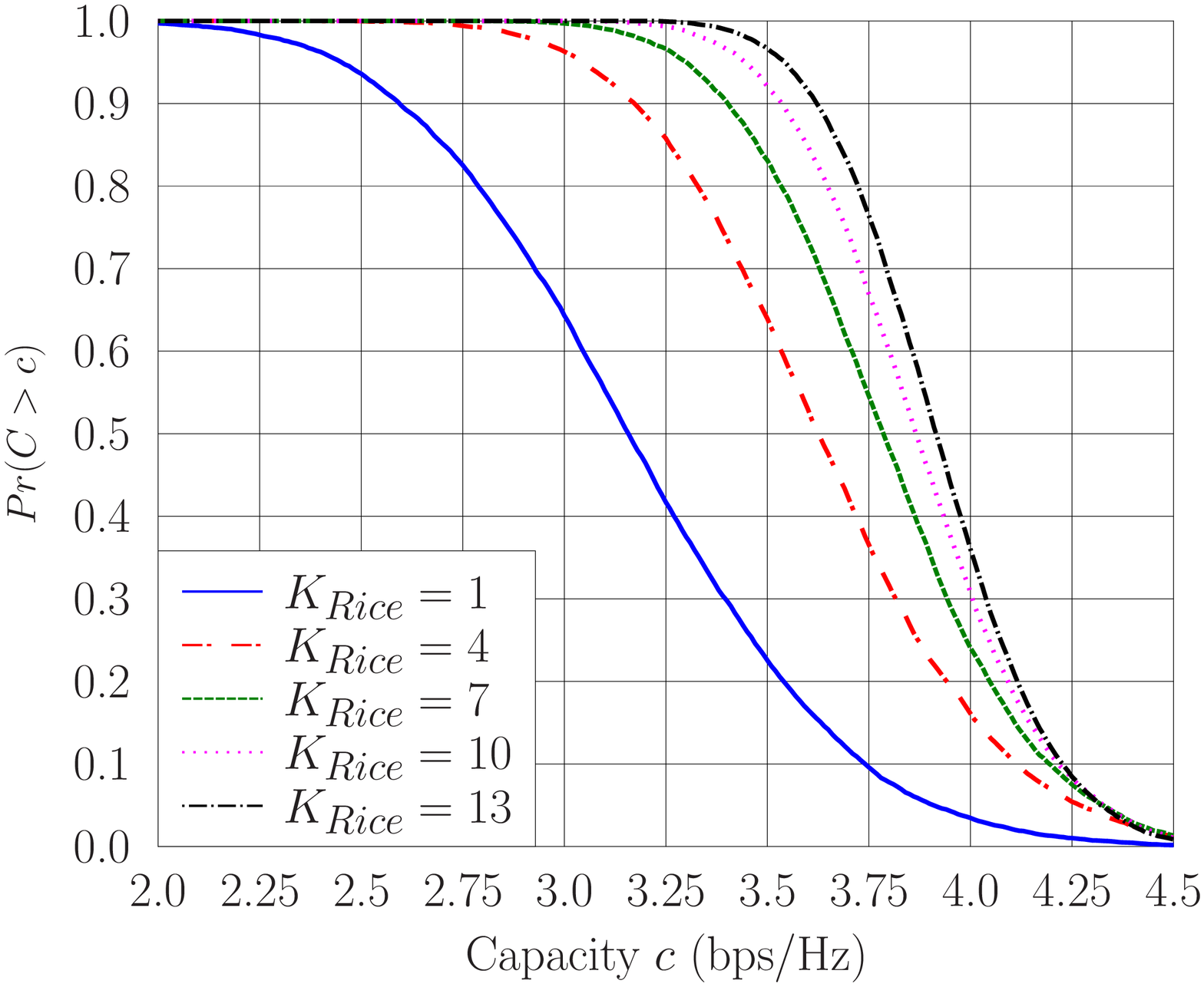}
  \label{FIG9b}}
\end{center}
\vspace{-4mm}
\caption{(a)~The achievable throughput per DRA as a function of the Rician factor
 $K_{\text{Rice}}$, and (b)~The CCDFs of the simulated throughputs per DRA for
 different values of $K_{\text{Rice}}$. The distances between the interfering aircraft
 and the desired receiving aircraft are uniformly distributed within the range of
 $\big[d_{b^{*}}^{a^{*}}, ~ D_{\text{max}}\big]$. The rest of the parameters are
 specified in Table~\ref{Tab3}.}
\label{FIG9}
\vspace{-4mm}
\end{figure*}

 Next the impact of the number of DRAs on the achievable throughput is  studied in
 Fig.~\ref{FIG7}. In particular, Fig.~\ref{FIG7a} portrays the achievable throughputs
 as a functions of $N_r$, where the left $y$-axis labels the achievable throughput
 per DRA and the right $y$-axis indicates the sum rate of the $N_r$ DRAs. As expected,
 the achievable sum rate increases with $N_r$. However, the increase in the sum rate
 is not proportional to the increase of $N_r$. In fact, it is clearly seen from
 Fig.~\ref{FIG7a} that the achievable throughput per DRA is reduced with the increase
 of $N_r$. The reason for this trend is because the inter-antenna interference increases
 with the increase of $N_r$, as seen in the third and fourth terms of (\ref{eq34}). The
 CCDFs of the simulated throughputs per DRA are illustrated in Fig.~\ref{FIG7b} for
 different $N_r$, which agree with the curve of the simulated capacity per DRA shown in
 Fig.~\ref{FIG7a}, namely the achievable throughput per DRA is lower for larger number
 of DRAs. 

 Fig.~\ref{FIG8} depicts the impact of the correlation factor $\rho$ of DTAs on the
 achievable throughput per DRA. It can be seen from Fig.~\ref{FIG8a} that strong signal
 correlation between DTAs will reduce the achievable throughput, as expected. This
 degradation is particularly serious in the Monte-Carlo simulation results, i.e. in
 practice, but less notable for the asymptotic theoretical upper bound. Note that there
 is a large gap between the theoretical upper bound and the Monte-Carlo simulation
 results when $\rho \ge 0.4$, { indicating that the asymptotic interference plus
 noise power, which is a lower bound of the true interference plus noise power, is no
 longer sufficiently tight.} The CCDFs of the simulated throughputs per DRA for
 different values of $\rho$ are shown in Fig.~\ref{FIG8b}, which statistically validates
 the curve of the simulated throughput per DRA given in Fig.~\ref{FIG8a}. 

 The impact of the Rician factor $K_{\rm Rice}$ on the achievable throughput is shown
 in Fig.~\ref{FIG9}. Specifically, Fig.~\ref{FIG9a} depicts the achievable throughputs
 per DRA as a functions of the Rician factor $K_{\rm Rice}$, while the CCDFs of the
 simulated throughput per DRA recorded for different values of  $K_{\rm Rice}$ are given
 in Fig.~\ref{FIG9b}. The results of Fig.~\ref{FIG9} clearly show that a higher Rician
 factor leads to a higher throughput. Observing the channel model (\ref{eq2}), we can
 see that a higher $K_{\rm Rice}$ results in a larger deterministic or LOS component,
 which is beneficial for the achievable performance.

\subsection{Discussions}\label{S4.2}

 In the above extensive simulation study, we have carefully investigated how the number
 of interfering aircraft $A$, the number of DTAs $N_t$, the number of DRAs $N_r$, the
 distance $d_{b^{*}}^{a^{*}}$ between the desired pair of communicating aircraft, the
 correlation factor $\rho$ between antennas, and the Rician factor $K_{\rm Rice}$ of
 the aeronautical communication channel impact on the achievable system performance of
 our distance-based ACM and large-scale antenna array aided AANET. Based on the
 simulation results, we can draw the following observations.

 The distance between the desired pair of communicating aircraft and the correlation
 factor of DTAs have adverse effects on the achievable system performance. Increasing
 $d_{b^{*}}^{a^{*}}$ and/or $\rho$ reduces the achievable transmission rate. On the
 other hand, increasing the number of DTAs, the number of DRAs, and/or the Rician
 factor is beneficial for the achievable system performance. Specifically, increasing
 $N_t$ and/or $K_{\rm Rice}$ lead to higher transmission rate, while increasing $N_r$
 also increases the total transmission rate, although the achieved throughput per DRA
 is reduced with  the increase of $N_r$.

 Most importantly, our extensive simulation results have validated the design presented
 in Section~\ref{S3} and provide the evidence that our design is capable of supporting
 the future Internet above the clouds. For example, let us consider { the AANET for airborne
 commercial Internet access} having a 5\,GHz carrier frequency and a 6\,MHz bandwidth, which
 is spatially shared by $A=14$ other aircraft { within the effective communication zone
 covered by the AANET.} When the distance between the desired pair of communicating aircraft
 is $d_{b^{*}}^{a^{*}}=10$\,km, our design is capable of offering a total data rate of
 79\,Mbps, as seen from Fig.~\ref{Fig4}. In the senario of $A=4$ and $d_{b^{*}}^{a^{*}}=70$\,km,
 our design is capable of providing a total data rate of 60\,Mbps, as observed from Fig.~\ref{FIG5}.

\section{Conclusions}\label{S5}

 A large-scale antenna array aided and novel distance-based ACM scheme has been proposed
 for aeronautical communications. Unlike the terrestrial instantaneous-SNR based ACM
 design, which is unsuitable for aeronautical communication applications, the proposed
 distance-based ACM scheme switches its coding and modulation mode according to the
 distance between the desired communicating aircraft. Based on our asymptotic closed-form
 theoretical analysis, we have explicitly derived the set of distance-thresholds for the
 proposed ACM design and have provided a theoretical upper bound of the achievable
 spectral efficiency and throughput, which has considered the impact of realistic channel
 estimation error and of co-channel  interference. Our extensive  simulation results have
 validated our design and theoretical analysis. This study therefore has provided a
 practical high-data-rate and high-spectral-efficiency solution for supporting the future
 Internet above the clouds.

\appendix

 First, we have the following lemma.\setcounter{equation}{38}
\begin{lemma}\label{L1}
 Let $\bm{A}\in \mathbb{C}^{N \times N}$ and $\bm{x}\sim \mathcal{CN}\left(\frac{1}{\sqrt{N}}\bm{m},
 \frac{1}{N}\bm{\Upsilon}\right)$, where $\frac{1}{\sqrt{N}}\bm{m}\in \mathbb{C}^{N \times 1}$ and
 $\frac{1}{N}\bm{\Upsilon}\in \mathbb{C}^{N \times N}$ are the mean vector and the covariance matrix
 of $\bm{x}$, respectively. Assume that $\bm{A}$ has a uniformly bounded spectral norm with respect
 to $N$ and $\bm{x}$ is independent of $\bm{A}$. Then we have\setcounter{equation}{38}
\begin{align}\label{eA1}
 \lim_{N\to \infty}\bm{x}^{\rm H}\bm{A}\bm{x} =& \text{Tr}\left\{\left(\frac{1}{N}\bm{M} +
 \frac{1}{N}\bm{\Upsilon}\right)\bm{A}\right\} ,
\end{align}
 where $\bm{M}=\bm{m}\bm{m}^{\rm H}$.
\end{lemma}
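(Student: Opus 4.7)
The plan is to decompose $\bm{x}$ into its deterministic mean and a zero-mean Gaussian fluctuation, expand the quadratic form term-by-term, evaluate each expectation in closed form, and then invoke a concentration argument to show that the random quadratic form converges (in probability) to that expectation as $N\to\infty$. Specifically, I would write $\bm{x}=\tfrac{1}{\sqrt{N}}\bm{m}+\bm{z}$ with $\bm{z}\sim \mathcal{CN}\bigl(\bm{0},\tfrac{1}{N}\bm{\Upsilon}\bigr)$, so that
\[
\bm{x}^{\rm H}\bm{A}\bm{x}=\tfrac{1}{N}\bm{m}^{\rm H}\bm{A}\bm{m}+\tfrac{1}{\sqrt{N}}\bm{m}^{\rm H}\bm{A}\bm{z}+\tfrac{1}{\sqrt{N}}\bm{z}^{\rm H}\bm{A}\bm{m}+\bm{z}^{\rm H}\bm{A}\bm{z}.
\]

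The expectation step is the easy part. Since $\bm{z}$ has zero mean and is independent of $\bm{A}$, the two cross terms vanish in expectation. Using the standard identity $\mathcal{E}\{\bm{z}^{\rm H}\bm{A}\bm{z}\}=\text{Tr}\bigl(\bm{A}\,\mathcal{E}\{\bm{z}\bm{z}^{\rm H}\}\bigr)=\tfrac{1}{N}\text{Tr}(\bm{A}\bm{\Upsilon})$, together with the rewriting $\tfrac{1}{N}\bm{m}^{\rm H}\bm{A}\bm{m}=\tfrac{1}{N}\text{Tr}(\bm{A}\bm{m}\bm{m}^{\rm H})=\tfrac{1}{N}\text{Tr}(\bm{A}\bm{M})$, cyclicity of the trace yields $\mathcal{E}\{\bm{x}^{\rm H}\bm{A}\bm{x}\}=\text{Tr}\bigl\{(\tfrac{1}{N}\bm{M}+\tfrac{1}{N}\bm{\Upsilon})\bm{A}\bigr\}$, which is precisely the claimed right-hand side, holding for every finite $N$.

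The harder part, and the main obstacle, is the concentration step showing that the random quantity actually converges to this deterministic value. For this I would bound the variance of each of the four terms and sum them. The deterministic term contributes zero. For the linear cross term, a direct computation gives $\text{Var}\bigl(\tfrac{1}{\sqrt{N}}\bm{m}^{\rm H}\bm{A}\bm{z}\bigr)=\tfrac{1}{N^{2}}\bm{m}^{\rm H}\bm{A}\bm{\Upsilon}\bm{A}^{\rm H}\bm{m}\le \tfrac{1}{N^{2}}\|\bm{A}\|^{2}\|\bm{\Upsilon}\|\,\|\bm{m}\|^{2}$, which vanishes because $\|\bm{m}\|^{2}=O(N)$ under the implicit normalization of the lemma. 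For the Gaussian quartic term, the classical variance formula for a complex Gaussian quadratic form gives $\text{Var}(\bm{z}^{\rm H}\bm{A}\bm{z})=\tfrac{1}{N^{2}}\text{Tr}\bigl(\bm{A}\bm{\Upsilon}\bm{A}^{\rm H}\bm{\Upsilon}\bigr)\le \tfrac{1}{N}\|\bm{A}\|^{2}\|\bm{\Upsilon}\|^{2}$, where the key inequality $\text{Tr}(\bm{B})\le N\|\bm{B}\|$ uses precisely the uniform bound on the spectral norm of $\bm{A}$ (and of $\bm{\Upsilon}$). Both variances are $O(1/N)\to 0$, so Chebyshev's inequality delivers convergence in probability, which is the natural interpretation of the equality written in (\ref{eA1}). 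The construction mirrors the standard deterministic-equivalent lemmas used throughout large-system analysis of massive MIMO, so once the spectral-norm bookkeeping is done carefully the rest is routine algebra.
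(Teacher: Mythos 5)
Your argument is correct, and it starts from exactly the same decomposition as the paper's proof: split $\bm{x}$ into its deterministic mean plus a zero-mean Gaussian fluctuation (the paper works with $\bm{y}=\sqrt{N}\bm{x}-\bm{m}$, which is just $\sqrt{N}$ times your $\bm{z}$) and expand the quadratic form into four terms. Where you genuinely differ is in how the convergence of the random terms is handled. The paper outsources this entirely: the two cross terms are killed by citing Lemma~1 of Fernandes \emph{et al.} and the fluctuation term $\frac{1}{N}\bm{y}^{\rm H}\bm{A}\bm{y}\to\text{Tr}\{\frac{1}{N}\bm{\Upsilon}\bm{A}\}$ is obtained by citing the trace lemma from Hoydis's thesis, so the paper's proof is essentially a two-line appeal to standard deterministic-equivalent results. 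You instead prove concentration from scratch: you compute the exact variances $\frac{1}{N^{2}}\bm{m}^{\rm H}\bm{A}\bm{\Upsilon}\bm{A}^{\rm H}\bm{m}$ and $\frac{1}{N^{2}}\text{Tr}\{\bm{A}\bm{\Upsilon}\bm{A}^{\rm H}\bm{\Upsilon}\}$ (both formulas are right for a circularly symmetric complex Gaussian), bound them by spectral norms, and invoke Chebyshev. This buys you two things the paper's version leaves implicit: it makes explicit the regularity conditions that both proofs actually require but that the lemma statement omits, namely $\|\bm{m}\|^{2}=O(N)$ and a uniform bound on $\|\bm{\Upsilon}\|$, and it pins down the mode of convergence (in probability) that the loosely written deterministic-looking limit in the lemma is meant to denote. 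Two minor points of hygiene: since $\bm{A}$ is only assumed independent of $\bm{x}$ and may itself be random, your variance computations should formally be read as conditional on $\bm{A}$ (harmless, since your bounds are uniform in $\bm{A}$ under the spectral-norm hypothesis); and your expectation step already gives the right-hand side exactly for every finite $N$, which is a slightly stronger statement than the paper makes.
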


\begin{proof}
 Let $\bm{y}=\sqrt{N}\bm{x}-\bm{m}$. As $\bm{x}\sim \mathcal{CN}\left(\frac{1}{\sqrt{N}}\bm{m},
 \frac{1}{N}\bm{\Upsilon}\right)$, we have $\bm{y}\sim \mathcal{CN}\left(\bm{0},\bm{\Upsilon}\right)$.
 Furthermore, we have
\begin{align}\label{eA2}
 &\bm{x}^{\rm H}\bm{A}\bm{x} = \left(\frac{1}{\sqrt{N}}\bm{m} + \frac{1}{\sqrt{N}}\bm{y}\right)^{\rm H}
  \bm{A}\left(\frac{1}{\sqrt{N}}\bm{m} + \frac{1}{\sqrt{N}}\bm{y}\right) \nonumber \\
 & \hspace*{3mm}= \frac{1}{N}\bm{m}^{\rm H}\bm{A}\bm{m} + \frac{1}{N}\bm{y}^{\rm H}\bm{A}\bm{y}
  + \frac{1}{N}\bm{m}^{\rm H}\bm{A}\bm{y} + \frac{1}{N}\bm{y}^{\rm H}\bm{A}\bm{m} .
\end{align}
 Since $\bm{y}\sim \mathcal{CN}\left(\bm{0},\bm{\Upsilon}\right)$ and $\bm{y}$ does not depend on
 $\bm{m}$, according to Lemma~1 of \cite{fernandes2013inter}, we have
\begin{align}
 \lim\limits_{N \to \infty} \frac{\bm{m}^{\rm H}\bm{A}\bm{y}}{N} =& 0 , \label{eA3} \\
 \lim\limits_{N \to \infty} \frac{\bm{y}^{\rm H}\bm{A}\bm{m}}{N} =& 0 . \label{eA4}
\end{align}
 Furthermore, according to the trace lemma of \cite{hoydis2012random}, we have
\begin{align}\label{eA5}
 \lim\limits_{N \to \infty} \left(\frac{1}{\sqrt{N}}\bm{y}^{\rm H}\right)\bm{A}\left(\frac{1}{\sqrt{N}}\bm{y}\right)
   = \text{Tr}\left\{\frac{1}{N}\bm{\Upsilon}\bm{A}\right\} .
\end{align}
 Substituting (\ref{eA3}) to (\ref{eA5}) into (\ref{eA2}) leads to (\ref{eA1}).
\end{proof}

 Recalling the distribution (\ref{eq20}), we have 
\begin{align}\label{eA6}
 \mathcal{E} \left\{  \left[\widehat{\bm{H}}_{b^*}^{a^*}\right]_{[n_r^{*}:~]}^{\rm H}
  \left[\widehat{\bm{H}}_{b^*}^{a^*}\right]_{[n_r^{*}:~]} \right\} =&
  \left[\bm{\Theta}_{b^{*}}^{a^{*}}\right]_{(n_r^{*}, n_r^{*})}  .
\end{align}
 Upon setting $\bm{x}=\left[\widehat{\bm{H}}_{b^*}^{a^*}\right]_{[n_r^{*}:~]}$ in conjunction with
 $\bm{A}=\bm{I}_{N_t}$ in Lemma~\ref{L1}, we have 
\begin{align}\label{eA7}
 \lim_{N_t\to \infty} \left[\widehat{\bm{H}}_{b^*}^{a^*}\right]_{[n_r^{*}:~]}
  \left[\widehat{\bm{H}}_{b^*}^{a^*}\right]_{[n_r^{*}:~]}^{\rm H} =&
  \text{Tr}\left\{\left[\bm{\Theta}_{b^{*}}^{a^{*}}\right]_{(n_r^{*}, n_r^{*})} \right\} .
\end{align}
 Hence, for a large $N_t$, which is the case considered in this paper, we have
\begin{align}\label{eA8}
  \left[\widehat{\bm{H}}_{b^*}^{a^*}\right]_{[n_r^{*}:~]}
  \left[\widehat{\bm{H}}_{b^*}^{a^*}\right]_{[n_r^{*}:~]}^{\rm H} \approx &
  \text{Tr}\left\{\left[\bm{\Theta}_{b^{*}}^{a^{*}}\right]_{(n_r^{*}, n_r^{*})} \right\} .
\end{align}
 In addition, according to the distribution of $\left[\widetilde{\bm{H}}_{b^*}^{a^*}\right]_{[n_r^{*}:~]}$
  given in (\ref{eq25}), we have
\begin{align}\label{eA9}
 & \mathcal{E}\left\{ \left[\widetilde{\bm{H}}_{b^*}^{a^*}\right]_{[n_r^{*}:~]}^{\rm H}
  \left[\widetilde{\bm{H}}_{b^*}^{a^*}\right]_{[n_r^{*}:~]}\right\} =
  \left[\bm{\Xi}_{b^{*}}^{a^{*}}\right]_{(n_r^{*}, n_r^{*})} .
\end{align}
 
 With the aid of (\ref{eA6}) and (\ref{eA8}) as well as (\ref{eA9}), we can readily derive
 (\ref{eq30}), as shown in the following
\begin{align}\label{eA10}
 & \text{Var}\left\{\left[\bm{H}_{b^*}^{a^*}\right]_{[n_r^{*}:~]}
  \left[\bm{V}_{b^*}^{a^*}\right]_{[~:n_r^{*}]}\right\} \nonumber \\
 &
  \hspace*{3mm}= \mathcal{E}\Bigg\{\Bigg| \left(\left[\widehat{\bm{H}}_{b^*}^{a^*}\right]_{[n_r^*:~]}
  + \left[\widetilde{\bm{H}}_{b^*}^{a^*}\right]_{[n_r^*:~]}
  \right)\left[\widehat{\bm{H}}_{b^*}^{a^*}\right]_{[n_r^{*}:~]}^{\rm H} 
 - \text{Tr}\left\{\left[\bm{\Theta}_{b^{*}}^{a^{*}}\right]_{(n_r^{*}, n_r^{*})}
  \right\}\Bigg|^2\Bigg\} \nonumber \\
 & \hspace*{3mm}\approx \mathcal{E}\Bigg\{\Bigg| \left[\widetilde{\bm{H}}_{b^*}^{a^*}\right]_{[n_r^*:~]}
  \left[\widehat{\bm{H}}_{b^*}^{a^*}\right]_{[n_r^{*}:~]}^{\rm H}\Bigg|^2\Bigg\} 
  = \mathcal{E}\Bigg\{\! \left[\widetilde{\bm{H}}_{b^*}^{a^*}\right]_{[n_r^{*}:~]}
  \left[\widehat{\bm{H}}_{b^*}^{a^*}\right]_{[n_r^{*}:~]}^{\rm H}
  \left[\widehat{\bm{H}}_{b^*}^{a^*}\right]_{[n_r^{*}:~]}
  \left[\widetilde{\bm{H}}_{b^*}^{a^*}\right]_{[n_r^{*}:~]}^{\rm H}\! \Bigg\} \nonumber \\
 & \hspace*{3mm}= \mathcal{E}\Bigg\{ \left[\widetilde{\bm{H}}_{b^*}^{a^*}\right]_{[n_r^{*}:~]}
  \mathcal{E}\bigg\{ \left[\widehat{\bm{H}}_{b^*}^{a^*}\right]_{[n_r^{*}:~]}^{\rm H}
  \left[\widehat{\bm{H}}_{b^*}^{a^*}\right]_{[n_r^{*}:~]} \bigg\} 
  \left[\widetilde{\bm{H}}_{b^*}^{a^*}\right]_{[n_r^{*}:~]}^{\rm H} \Bigg\}\nonumber \\
 &
  = \mathcal{E}\left\{\left[\widetilde{\bm{H}}_{b^*}^{a^*}\right]_{[n_r^{*}:~]}
  \left[\bm{\Theta}_{b^{*}}^{a^{*}}\right]_{(n_r^{*}, n_r^{*})}
  \left[\widetilde{\bm{H}}_{b^*}^{a^*}\right]_{[n_r^{*}:~]}^{\rm H}\right\} \nonumber \\
 & \hspace*{3mm} = \text{Tr}\left\{\left[\bm{\Xi}_{b^{*}}^{a^{*}}\right]_{(n_r^{*}, n_r^{*})}
  \left[\bm{\Theta}_{b^{*}}^{a^{*}}\right]_{(n_r^{*}, n_r^{*})} \right\} .
\end{align}


\end{document}